\documentclass{article}

\usepackage[utf8]{inputenc}
\usepackage{subfigure}

\usepackage[top=3cm,bottom=3cm,left=3cm,right=3cm]{geometry}
\usepackage{amsmath,amsfonts,amssymb,amsthm}
\usepackage{enumerate}
\usepackage{url}
\usepackage{array}
\usepackage[warn]{textcomp}
\usepackage{tikz}
\usetikzlibrary{decorations.pathreplacing,calc}
\usetikzlibrary{decorations.markings}
\usetikzlibrary{decorations.pathmorphing}


\usepackage{todonotes}

\newcommand{\N}{\mathbb{N}}
\newcommand{\Z}{\mathbb{Z}}

\newcommand{\leaf}[1]{|#1|_{\mbox{\scriptsize\textleaf}}}
\newcommand{\size}[1]{|#1|}

\newcommand{\dist}{\mathrm{dist}}
\newcommand{\bigo}{\mathcal{O}}
\newcommand{\IndepSet}{$\textsc{IndependentSet}$} 		
\newcommand{\ISil}{$\mathrm{LIS}$}
\newcommand{\MLIS}{$\mathrm{FLIS}$}

\newcommand{\Hex}{\mathrm{Hex}}
\newcommand{\Squ}{\mathrm{Squ}}
\newcommand{\Tri}{\mathrm{Tri}}
\newcommand{\Cub}{\mathrm{Cub}}
\newcommand{\La}{\lambda}
\newcommand{\green}{\emph{green}}
\newcommand{\yellow}{\emph{yellow}}
\newcommand{\red}{\emph{red}}
\newcommand{\blue}{\emph{blue}}

\newtheorem{theorem}{Theorem}[section]

\newtheorem{lemma}{Lemma}[section]
\newtheorem{proposition}{Proposition}[section]
\newtheorem{corollary}{Corollary}[section]
\theoremstyle{definition}
\newtheorem{definition}{Definition}[section]
\newtheorem{example}{Example}[section]

\newtheorem{remark}{Remark}[section]
\newtheorem{problem}{Problem}[section]

\usepackage{algorithm}
\usepackage{algpseudocode}

\title{Fully leafed induced subtrees\thanks{A.~Blondin Mass\'e is supported by a grant from the National Sciences and Engineering Research Council of Canada (NSERC) through Individual Discovery Grant RGPIN-417269-2013. M.~Lapointe   is supported by a scholarship  CGSD3-488894-2016 from the NSERC and \'E.~Nadeau is supported by a scholarship from the NSERC.}
}
\author{
	Alexandre Blondin Mass\'e 
	\and Julien de Carufel
	\and Alain Goupil
	\and M\'elodie Lapointe
	\and \'Emile Nadeau
	\and \'Elise Vandomme
}
%
%

\begin{document}
\maketitle

\begin{abstract}
Let $G$ be a simple graph on $n$ vertices.
We consider the problem \ISil{} of deciding whether there exists an induced subtree with exactly $i \leq n$ vertices and $\ell$ leaves in $G$.
We study the associated optimization problem, that consists in computing the maximal number of leaves, denoted by $L_G(i)$, realized by an induced subtree with $i$ vertices, for $0 \le i \le n$.
We begin by proving that the \ISil{} problem is NP-complete in general and then we compute the values of the map $L_G$ for some classical families of graphs and in particular for the $d$-dimensional hypercubic graphs $Q_d$, for $2 \leq d \leq 6$.
We also describe a nontrivial branch and bound algorithm that computes the function $L_G$ for any simple graph $G$.
In the special case where $G$ is a tree of maximum degree $\Delta$, we provide a $\bigo(n^3\Delta)$ time and $\bigo(n^2)$ space algorithm to compute the function  $L_G$.
\end{abstract}


\section{Introduction}\label{sec:intro}

In the past decades, subtrees of graphs, as well as their number of leaves, have been the subject of investigation from various communities.
For instance in 1984, Payan {\it et al.}~\cite{payan} discussed the maximum number of leaves, called the \emph{leaf number}, that can be realized by a spanning tree of a given graph.
This problem, called the \emph{maximum leaf spanning tree problem} ($\mathrm{MLST}$), is known to be NP-complete even in the case of regular graphs of degree $4$~\cite{garey} and has attracted interest in the telecommunication network community \cite{networks,chen}.
The \emph{frequent subtree mining problem} \cite{deepak}
investigated in the data mining community, has applications in biology.
The detection of subgraph patterns such as induced subtrees is useful in information retrieval~\cite{zaki} and requires efficient algorithms for the enumeration of induced subtrees.
In this perspective, Wasa {\it et al.}~\cite{wasa} proposed an efficient parametrized algorithm for the generation of induced subtrees in a graph.

The center of interest of this paper  are induced subtrees.
The  \emph{induced} property requirement brings an interesting constraint on subtrees, yielding distinctive structures   with respect to other constraints such as in the $\mathrm{MLST}$ problem.
A first result due to Erd\H{o}s \textit{et al.}  in 1986, showed that the problem of finding an induced subtree of a given graph $G$ with more than $i$ vertices is NP-complete~\cite{Erdos-Saks-Sos}.
Another similar famous problem in the error-correcting codes community, called \emph{snake-in-the-box} \cite{Kautz-1958} problem, asks for the length of the longest induced path subgraph in hypercubes and is still open as of today.
Similarly self-avoiding walks, or paths, have been investigated in various lattices \cite{bousquet,duminil}.
When one adds the constraint of being induced, these walks become \emph{thick walks}.
A particular family of thick walks on the square lattice was successfully investigated in \cite{goupil}.

Among induced subtrees of simple graphs, we focus in particular on those with a maximal number of leaves. We call these objects \emph{fully leafed induced subtrees} (FLIS).
Particular instances of the FLIS have recently appeared  in the paper of Blondin Mass\'e \textit{et al.}~\cite{Blondin-FPSAC}, where the authors considered the maximal number of leaves that can be realized by tree-like polyominoes,  respectively polycubes, which are edge connected, respectively face connected,  sets of unit square, respectively cubes.
The investigation of fully leafed tree-like polyforms led to the discovery of a new 3D tree-like polycube
 structure that realizes the maximal number of leaves constraint.
The observation that tree-like polyominoes and polycubes are induced subgraphs of the  lattices $\Z^2$ and $\Z^3$ respectively leads naturally to the investigation of FLIS in general simple graphs, either finite or infinite.

To begin with, we consider the  decision problem, called \emph{leafed induced subtree problem (\ISil)}, and its associated optimization problem, \emph{fully leafed induced subtree problem (\MLIS)}:

\begin{problem}[\ISil]
  Given a simple graph $G$ and two positive integers $i$ and $\ell$, does there exist an induced subtree of $G$ with $i$ vertices and $\ell$ leaves?
\end{problem}

\begin{problem}[\MLIS]
Given a simple graph $G$ on $n$ vertices, what is the maximum number  of leaves, $L_G(i)$,  that can be realized by an induced subtree of $G$ with $i$ vertices, for $i\in\{0,1,\ldots,n\}$?
\end{problem}

If $T$ is an induced subtree of $G$ with $i$ vertices, we say that $T$ is \emph{fully leafed} when its number of leaves is exactly $L_G(i)$. Examples of fully leafed induced subtrees are given in Figure~\ref{F:polyforms}.
\begin{figure}
  \centering
  \includegraphics[scale=0.9]{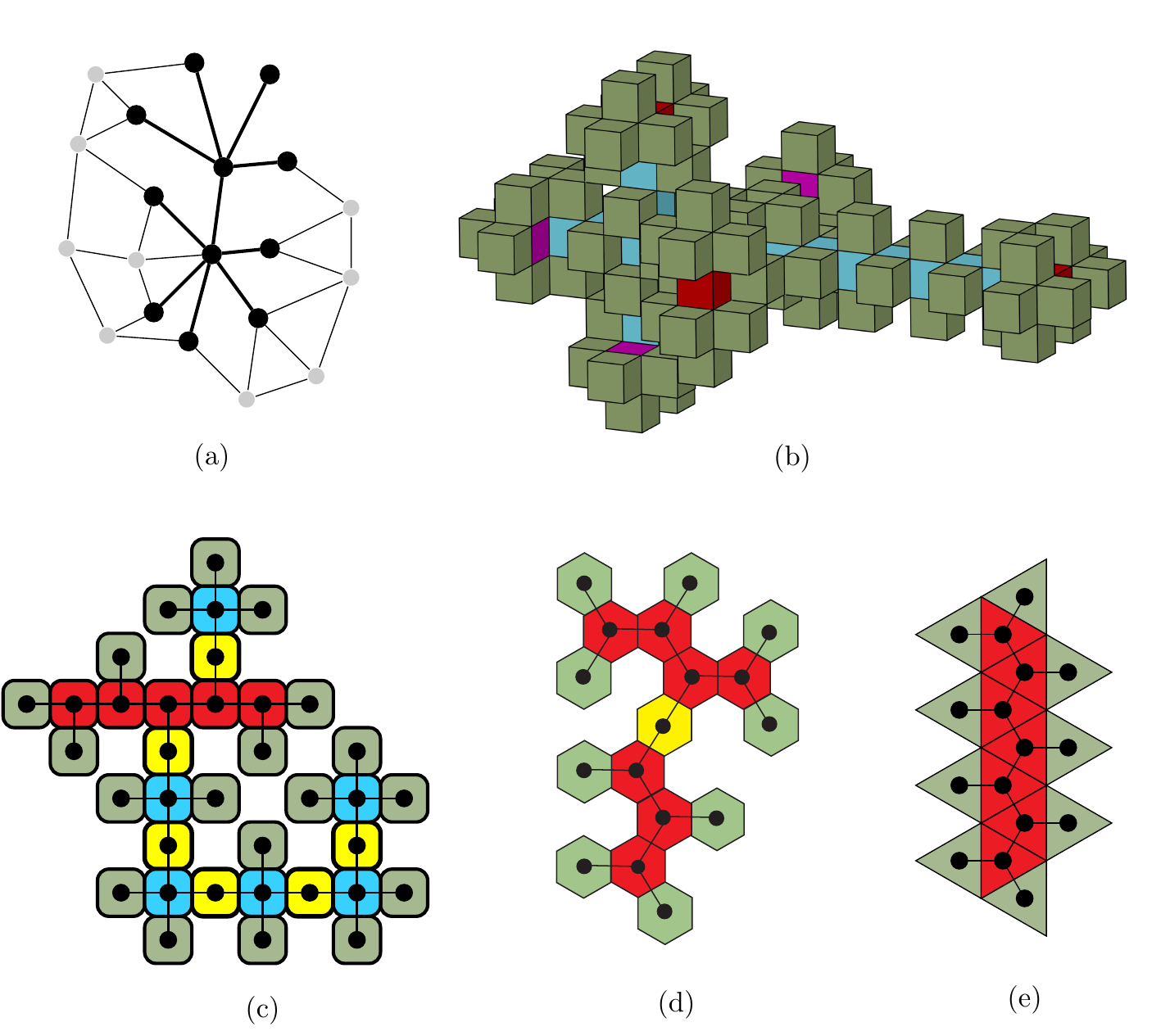}
  \caption{Fully leafed induced subtrees in various graphs. (a) In a finite graph (the subtree of $i = 11$ vertices appears in black). (b) In the cubic lattice. (c) In the square lattice. (d) In the hexagonal lattice. (e) In the triangular lattice. The color of each cell indicates its degree: pink for degree $5$, blue for degree $4$, red for degree $3$, yellow for degree $2$ and green for degree $1$ (the leaves).}\label{F:polyforms}
 \end{figure}
We believe that fully leafed induced subtrees are interesting candidates for the representation of structures appearing in nature and in particular in molecular networks.
Indeed, in chemical graph theory, subtrees are known to be useful in the computation of  a characteristic of chemical graph, called the \emph{Wiener index}, that corresponds to a topological index of a molecule~\cite{Szekely-Wang}.
The results of \cite{Szekely-Wang} and \cite{Blondin-FPSAC} suggest that a thorough investigation of subtrees, and in particular induced subtrees with many leaves, could lead to the discovery of combinatorial structures relevant to chemical graph theory.

This paper establishes fundamental results on fully leafed induced subtrees for further theoretical investigations and their applications.
First, we prove that the problem \ISil{} is NP-complete.
To tackle the problem \MLIS{}, we provide a branch and bound algorithm.
Contrary to a naive algorithm that considers all induced subtrees to compute the maximal number of leaves, the strategy prunes the search space by discarding induced subtrees that cannot be extended to fully leafed subtrees.
When we restrict our investigation to the case of trees, it turns out that the problem \MLIS{} is polynomial.
To achieve this polynomial complexity, our proposed algorithm
 uses a dynamic programming strategy.
Notice that a naive greedy approach cannot work, even in the case of trees, because a fully leafed induced subtree with $n$ vertices is not necessarily a subtree  of  a fully leafed induced subtree with $n+1$ vertices.
All algorithms discussed in this paper are available, with examples, in a public GitHub repository~\cite{GitHub}.

The manuscript is organized as follows.
Basic notions are recalled in Section~\ref{sec:preliminaries} and a proof of the NP-completeness of the decision problem \ISil{} is given. We also study the function $L_G$ in classical families of graphs.
A general branch and bound algorithm to compute \MLIS{} is described  in Section~\ref{sec:algo}.
In Section \ref{sec:tree}, we exhibit a polynomial algorithm to compute the function $L_G$ when $G$ is a tree so that the problem \MLIS{} is in the class P for the particular case of trees.
We conclude the paper in Section~\ref{sec:concl} with some perspectives on future work.


\section{Fully leafed induced subtrees}\label{sec:preliminaries}

We recall some definitions from graph theory and refer the reader to \cite{Diestel--2010} for fundamental notions.
All  graphs considered in this text are simple and undirected unless stated otherwise.
Let $G = (V,E)$ be a graph with vertex set $V$ and edge set $E$.
Given two vertices $u$ and $v$ of $G$, we denote by $\dist(u,v)$ the \emph{distance} between $u$ and $v$, that is the number of edges in a shortest path between $u$ and $v$.
The \emph{degree} of a vertex $u$ is the number of vertices that are at distance 1 from $u$ and is denoted by $\deg(u)$.
We denote by $\size{G}$ the total number $|V|$ of vertices  of $G$ and we  call it the \emph{size} of $G$.
For $U \subseteq V$, the \emph{subgraph of $G$ induced by $U$}, denoted by $G[U]$, is the graph $G[U] = (U, E \cap \mathcal{P}_2(U))$, where $\mathcal{P}_2(U)$ is the set of all subsets of size $2$ of $U$.
Let $T = (V,E)$ be a \emph{tree},  that is to say, a connected and acyclic graph. A vertex $u \in V$ is called a \emph{leaf} of $T$ when $\deg(u) = 1$.
The number of leaves of $T$ is denoted by $\leaf{T}$.
A \emph{subtree of $G$ induced by $U$} is an induced subgraph that is also a tree.

The next definitions and notation are useful in the study of the \ISil{} and \MLIS{} problems.

\begin{definition}[Leaf function]\label{D:leaf-function}
Given a finite or infinite graph $G = (V,E)$, let $\mathcal{T}_G(i)$ be the family of all induced subtrees of $G$ with exactly $i$ vertices. The \emph{leaf function} of $G$, denoted by $L_G$, is the function with domain $\{0,1,2,\ldots,\size{G}\}$ defined by
$$L_G(i)=\max\{\leaf{T} : T\in \mathcal{T}_G(i)\}.$$
As is customary, we set $\max\emptyset = -\infty$.
An induced subtree $T$ of $G$ with $i$ vertices is called \emph{fully leafed} when $\leaf{T} = L_G(i)$.
\end{definition}

\begin{example}
 Consider the graph $G$ depicted in Figure~\ref{fig:leaf_function}. Its leaf function is
 $$
   \begin{array}{c|ccccccccc}
     i &      0 & 1 & 2 & 3 & 4 & 5 & 6 & 7 & 8\\ \hline
     L_G(i) & 0 & 0 & 2 & 2 & 3 & 4 & 4 & 5 & -\infty\\
   \end{array}
 $$
 and the subtree induced by $U=\{1,2,3,4,6,8\}$ is fully leafed because it has $6$ vertices, $4$ of them are leaves, and because $L_G(6) = 4$.
\end{example}

\begin{figure}[h]
 \begin{center}
  \includegraphics[scale=1]{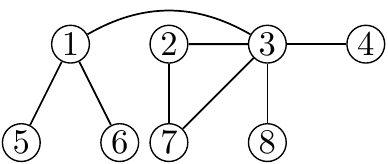}
 \end{center}
 \caption{A graph with vertex set $V=\{1,\ldots,8\}$.}\label{fig:leaf_function}
\end{figure}

\begin{remark}
For any simple graph $G$, we have $L_G(0) = 0$ because the empty tree has no leaf, and $L_G(1) = 0$, since a single vertex is not a leaf. Finally, we always have $L_G(2)=2$ in any graph $G$ with at least one edge.
\end{remark}

The following observations are immediate.
\begin{proposition}
Let $G$ be a connected graph with $n\ge 3$ vertices. If $G$ is non-isomorphic to $K_n$, the complete graph on $n$ vertices, then $L_G(3) = 2$.
\end{proposition}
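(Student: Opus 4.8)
The plan is to show that under the stated hypotheses, an induced subtree on three vertices must be a path $P_3$, which has exactly two leaves, and that such a subtree always exists. An induced subtree on three vertices is a connected, acyclic induced subgraph on three vertices; the only two graphs on three vertices that are trees are the path $P_3$ (with two leaves) and there is no other tree on three vertices, since the triangle $K_3$ contains a cycle and is therefore excluded. Consequently, whenever $\mathcal{T}_G(3)$ is nonempty, every element has exactly two leaves, so $L_G(3) = 2$. The entire content of the proposition thus reduces to establishing that $\mathcal{T}_G(3) \neq \emptyset$, i.e.\ that $G$ contains three vertices inducing a $P_3$.

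\medskip

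First I would argue that $G$ contains a path on three vertices as an \emph{induced} subgraph. Since $G$ is connected with $n \geq 3$, it contains at least one edge $\{u,v\}$, and because $n \geq 3$ there is a third vertex; using connectedness I can find a vertex $w$ adjacent to $u$ or $v$ but distinct from both. The subgraph induced by $\{u,v,w\}$ is then a tree (hence lies in $\mathcal{T}_G(3)$) precisely when it is \emph{not} a triangle, that is, when $w$ is adjacent to exactly one of $u,v$. So the key step is to produce such a triple with the induced triangle ruled out, and this is where the hypothesis $G \not\cong K_n$ enters: if \emph{every} connected triple induced a triangle, I would expect $G$ to be forced to be complete.

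\medskip

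I expect the main obstacle to be precisely this last step: converting ``$G$ is not complete'' into ``$G$ has an induced $P_3$.'' The clean way is by contrapositive. Suppose $G$ has no induced $P_3$; I claim this forces $G \cong K_n$. Indeed, ``no induced $P_3$'' means that for any vertex $x$, its neighborhood cannot contain two vertices that are nonadjacent to each other (otherwise $x$ together with those two neighbors would induce a $P_3$), so every neighborhood induces a clique. Combined with connectedness, a short argument shows the adjacency relation is transitive on the connected graph, hence every pair of vertices is adjacent, giving $G = K_n$. Since $G \not\cong K_n$ by hypothesis, $G$ must contain an induced $P_3$, and we are done.

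\medskip

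In summary, the proof has two independent pieces: a trivial structural observation that any tree on three vertices is a $P_3$ with exactly two leaves, and an existence argument that the non-complete connected graph $G$ on $n \geq 3$ vertices necessarily contains an induced $P_3$. I would write the existence part by contraposition as above, since showing ``$P_3$-free connected $\Rightarrow$ complete'' is cleaner than a direct case analysis on how the third vertex attaches.
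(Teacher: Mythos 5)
Your proof is correct. The paper states this proposition without proof (it is introduced as one of several ``immediate'' observations), and your argument --- that any induced subtree on three vertices is necessarily a $P_3$ with exactly two leaves, together with the standard fact that a connected graph with no induced $P_3$ must be complete (every neighborhood is a clique, hence adjacency is transitive) --- is precisely the reasoning the authors leave implicit, written out in full.
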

\begin{proposition} \label{prop:non-decreasing}
For any simple graph $G$ with at least 3 vertices,
the sequence $(L_G(i))_{i=0,1,\ldots,\size{G}}$ is non-decreasing  if and only if $G$ is a tree.
\end{proposition}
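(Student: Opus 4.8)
The statement is an equivalence, so I would prove the two implications separately, and both hinge on understanding the top of the sequence, namely $L_G(\size{G})$. The only induced subgraph of $G$ on all $\size{G}$ vertices is $G$ itself, so $\mathcal{T}_G(\size{G}) = \{G\}$ when $G$ is a tree and $\mathcal{T}_G(\size{G}) = \emptyset$ otherwise; consequently $L_G(\size{G}) = \leaf{G}$ when $G$ is a tree and $L_G(\size{G}) = -\infty$ when it is not. This dichotomy is the engine of the whole proof.

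For the implication ``non-decreasing $\Rightarrow$ tree'' I would argue directly. Assuming the sequence is non-decreasing, and using $L_G(0) = 0$, monotonicity forces $L_G(\size{G}) \geq L_G(0) = 0$, so in particular $L_G(\size{G}) \neq -\infty$. Hence $\mathcal{T}_G(\size{G}) \neq \emptyset$, and by the observation above its only possible member, $G$, must be a tree. This settles one direction in a couple of lines.

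For the converse, suppose $G$ is a tree with $n = \size{G} \geq 3$. I would first check that $\mathcal{T}_G(i) \neq \emptyset$ for every $i \in \{0,1,\ldots,n\}$, so that no value of the sequence is $-\infty$: starting from $G$ and repeatedly deleting a leaf (a tree on at least one vertex always has one) yields induced subtrees of every size from $n$ down to $0$. The core is then a pendant-extension step giving $L_G(i) \le L_G(i+1)$ for $1 \le i \le n-1$. Pick a fully leafed $T \in \mathcal{T}_G(i)$, so $\leaf{T} = L_G(i)$. Since $G$ is connected and $\size{T} = i < n$, there is a vertex $v \notin T$ adjacent to some $u \in T$; and since $G$ is acyclic, $v$ cannot have a second neighbor $w \in T$ (a path from $u$ to $w$ inside the connected $T$ together with the edges $vu$ and $vw$ would form a cycle). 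Thus $v$ attaches to $T$ only at $u$, and $T' := G[V(T) \cup \{v\}]$ is connected and acyclic, hence an induced subtree on $i+1$ vertices.

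It remains to track the leaf count, which is the only delicate point. Adding $v$ as a pendant creates one new leaf and alters the degree of $u$ alone. If $u$ was internal in $T$ it stays internal and $\leaf{T'} = \leaf{T} + 1$; if $u$ was a leaf of $T$ it becomes internal, so the gained leaf $v$ exactly compensates the lost leaf $u$ and $\leaf{T'} = \leaf{T}$; and in the degenerate case $i = 1$, where $u$ is isolated and hence not a leaf, both $u$ and $v$ become leaves, giving $\leaf{T'} = 2$. In every case $\leaf{T'} \geq \leaf{T}$, whence $L_G(i+1) \geq \leaf{T'} \geq L_G(i)$; combined with $L_G(0) = L_G(1) = 0 \le L_G(2) = 2$ this yields monotonicity. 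The main obstacle is precisely this bookkeeping together with its hidden reliance on acyclicity: it is because $G$ has no cycle that the new vertex has a single attachment point, so that extending by one vertex destroys at most one existing leaf while always creating a new one---an estimate that fails in graphs with cycles, which is exactly what makes the equivalence true.
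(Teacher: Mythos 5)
Your proof is correct and follows essentially the same route as the paper: one direction rests on the fact that $L_G(\size{G})=-\infty$ exactly when $G$ is not a tree, and the other on extending a fully leafed subtree by one pendant vertex without decreasing the leaf count (the paper states this as the monotonicity of $\leaf{\cdot}$ under subtree containment, which your single-step bookkeeping makes explicit). Your variant is slightly more careful in the degenerate cases ($i=0,1$ and edgeless graphs), but the argument is the same.
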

\begin{proof}
If $G$ is a tree, then  $L_G(i)$ cannot be decreasing because if a  subtree $T_1$  of $G$ contains a subtree $T_2$ then $\leaf{T_1}\geq \leaf{T_2}$. If $G$ is not a tree, then either $G$ contains a cycle or $G$ is not connected.  In both cases, $G$ has no subtree with $\size{G}$ vertices. Therefore $L_G(\size{G})=-\infty$ and $L_G(2)=2$ which implies that there  exists a decreasing step in the sequence $L_G(i)$.
\end{proof}

We now describe the complexity of solving the problem \ISil{}.

\begin{theorem}
The problem \ISil{} of determining whether there exists an induced subtree with $i$ vertices and $\ell$ leaves in a given graph is NP-complete.
\end{theorem}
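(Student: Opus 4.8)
The plan is to establish membership in NP and then give a polynomial-time reduction from a known NP-complete problem. Membership is straightforward: a certificate is the vertex set $U$ of a candidate induced subtree, and one can verify in polynomial time that $|U| = i$, that $G[U]$ is connected and acyclic (for instance by checking $|E(G[U])| = i - 1$ together with connectivity), and that $G[U]$ has exactly $\ell$ leaves by counting vertices of degree $1$ in the induced subgraph. So the crux is hardness.

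For the reduction, I would reduce from \IndepSet{}, which is the canonical choice given the macro the authors have defined. The difficulty is that an independent set is an edgeless induced subgraph, whereas \ISil{} asks for a connected, acyclic induced subgraph with a prescribed leaf count; I need a gadget that forces an induced subtree to ``encode'' an independent set. The natural idea is to attach, to an instance $(H, k)$ of \IndepSet{} on vertex set $V(H) = \{v_1, \ldots, v_n\}$, a single new \emph{apex} vertex $c$ adjacent to every $v_j$, producing a graph $G$. Then any subset $S \subseteq V(H)$ together with $c$ induces a subtree of $G$ precisely when $S$ is independent in $H$: the edges from $c$ make $G[S \cup \{c\}]$ connected, and it is acyclic exactly when $S$ contains no edge of $H$ (an edge $v_av_b$ inside $S$ would close a triangle $c v_a v_b$). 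When $S$ is independent with $|S| = k \ge 2$, the resulting tree is a star with center $c$ and $k$ leaves.

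\textbf{The main obstacle} is that this clean correspondence only produces stars, so asking ``is there an induced subtree with $i = k+1$ vertices and $\ell = k$ leaves?'' is essentially asking for an independent set of size exactly $k$ whose closed neighborhood structure yields a star — but one must rule out other induced subtrees of $G$ on $k+1$ vertices that achieve $k$ leaves \emph{without} using $c$, or that use $c$ but correspond to a non-independent configuration. The key lemma to prove is therefore a structural one: for suitable parameters, every induced subtree of $G$ on $k+1$ vertices with $k$ leaves must be a star centered at $c$, and hence its non-central vertices form an independent set of size $k$ in $H$. I would argue that a tree on $k+1$ vertices has $k$ leaves if and only if it is a star $K_{1,k}$, so the leaf constraint alone forces the star shape; the only candidate centers are $c$ (if $\deg_H$ is too small, a vertex of $H$ cannot be the center of a large enough induced star, which can be arranged by taking $k$ large relative to the degrees, or by subdividing/padding).

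Concretely, I would set $i = k+1$ and $\ell = k$ and prove the equivalence: $H$ has an independent set of size $k$ if and only if $G$ has an induced subtree with $k+1$ vertices and $k$ leaves. The forward direction uses the star $c \star S$ as above. For the converse, I invoke that a $(k+1)$-vertex tree with $k$ leaves is $K_{1,k}$, identify its center, and show the center must be $c$ (handling the boundary case where a high-degree vertex of $H$ could host a star by choosing $k$ to exceed the maximum degree, or by preprocessing $H$ to bound degrees), so that the $k$ leaves lie in $V(H)$ and are pairwise non-adjacent since they are leaves of an \emph{induced} tree. The whole construction is linear in the size of $H$, giving a polynomial reduction and completing the NP-completeness proof.
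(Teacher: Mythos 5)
Your reduction is the same as the paper's: add a universal apex vertex $c$ adjacent to all of $V(H)$ and ask for an induced subtree with $i=k+1$ vertices and $\ell=k$ leaves. The NP-membership argument and the forward direction are fine. The problem is in your converse. You correctly observe that a tree on $k+1$ vertices with $k$ leaves must be a star $K_{1,k}$, but you then set yourself the task of proving that its center must be $c$, and the means you propose for ensuring this --- ``taking $k$ large relative to the degrees'' or ``preprocessing $H$ to bound degrees'' --- do not work: $k$ is part of the input to \IndepSet{}, so you cannot enlarge it, and degree-reducing surgery on $H$ (subdividing or padding) changes which vertex sets are independent and of what size, so it is not a harmless preprocessing step. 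As written, the converse direction has a hole exactly at the boundary case you flagged.

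The repair is to notice that the structural lemma you are aiming for is both false in general and unnecessary. An induced star $K_{1,k}$ in the constructed graph may perfectly well be centered at a vertex of $H$ rather than at $c$. What you actually need is only that none of the $k$ \emph{leaves} is $c$, and this is automatic for $k\ge 2$: if $c$ belongs to an induced subtree on $k+1\ge 3$ vertices, it is adjacent to every other vertex of that subtree, hence has degree $k\ge 2$ in the induced subgraph and cannot be a leaf. Therefore the $k$ leaves all lie in $V(H)$ and, being the leaves of an \emph{induced} star, are pairwise non-adjacent in $H$ --- an independent set of size $k$ --- regardless of where the center sits and indeed regardless of whether $c$ belongs to the subtree at all. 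This is precisely the observation the paper uses in place of your ``center must be $c$'' lemma. You should still dispose of the degenerate values $k=0$ and $k=1$ separately (as the paper does), since the ``$c$ is not a leaf'' argument needs $k+1\ge 3$.
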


\begin{proof}
	It is clear that \ISil{} is in the class NP.
	To show that it is NP-complete, we reduce it to the well-known NP-complete problem \textsc{Independent Set} (\IndepSet{})~\cite{garey}: \textit{Given a graph $G$ and a positive integer $k$, does there exist an independent set of size $k$ in $G$, i.e. a subset of $k$ vertices that are not pairwise adjacent?}
	Note that an instance of \ISil{} is represented by the tuple $(G,i,\ell)$ where $G$ is a graph, $i$ the vertex parameter and $\ell$ the leaf parameter.
	We represent an instance of \IndepSet{} by the tuple $(G,k)$ where $G$ is a graph and $k$ is an integer.

	Consider the map $f$ that associates to an instance $(G,k)$ of
        \IndepSet{} with $G=(V,E)$, the instance $(H,k+1,k)$ of \ISil{} such that the graph $H$ is obtained as $G$ with an additional universal vertex $u$, that is linked to each vertex of $G$.
	Clearly, the map $f$ is computable in polynomial time
	as the graph obtained has $|V|+1$ vertices and $|E|+|V|$ edges.


	If $(G,k)$ is a positive instance of \IndepSet{}, i.e.\ an instance for which the answer is yes, then $f(G,k)=(H,k+1,k)$ is a positive instance of \ISil{}. Indeed, assume that the graph $G$ has an independent set of size $k$. Then these vertices together with the universal vertex $u$ is an induced subtree of $H$ with $k+1$ vertices and $k$ leaves.
	Conversely, if the instance $f(G,k)=(H,k+1,k)$ is a positive instance of \ISil{}, then $(G,k)$ is a positive instance of \IndepSet{}.
	Indeed, assume that $H$ contains an induced subtree $T$ with $k+1$ vertices and $k$ leaves. Observe that the universal vertex $u$ cannot be a leaf unless $k+1\le 2$.
	Consider first that $k>1$. Then the subtree leaves form an independent set of size $k$ of $H$ and of $G$.
	Consider now that $k=1$. As $H$ contains an induced subtree $T$ with $k+1=2$ vertices, $G$ has at least one vertex, which is an independent set of size $1$.
	For $k=0$, $(G,0)$ is clearly a positive instance of \IndepSet{}.

	Therefore, \IndepSet{} $\le$ \ISil{} and \ISil{} is NP-complete.
\end{proof}

From this reduction, we obtain insights on the parameterized complexity of  \ISil{} problem.
A problem, which is parameterized by $k_1,\ldots,k_j$, is said to be \emph{fixed parameter tractable} if it can be solved in time $\bigo(f(k_1,\ldots,k_j)n^c)$ where $n$ is the size of the input, $c$ is a constant independent from the parameters $k_1,\ldots,k_j$ and $f$ is a function of $k_1,\ldots,k_j$.
The class $\mathrm{FPT}$ contains all parameterized problems that are \emph{fixed parameter tractable}.
Similarly to the conventional complexity theory, Downey and Fellows introduced a hierarchy of complexity classes to describe the complexity of parameterized problems \cite{Downey-Fellows-1999}:  $\mathrm{FPT}\subseteq \mathrm{W[1]}\subseteq \mathrm{W[2]}\subseteq \ldots$
Since \IndepSet{} is  $\mathrm{W[1]}$-complete \cite{Downey-Fellows-1995b}, it follows that \ISil{} is probably fixed parameter intractable.
Note that when we replace the induced condition with spanning, the problem becomes fixed parameter tractable \cite{Bodlaender-1989,Downey-Fellows-1995}.

\begin{corollary}
	If $\mathrm{FTP}\neq\mathrm{W[1]}$, then \ISil{} $\not\in\mathrm{FPT}$.
\end{corollary}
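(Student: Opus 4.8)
The plan is to upgrade the reduction $f$ built in the preceding theorem from a mere polynomial-time Karp reduction to a parameterized (fpt) reduction, and then to invoke the standard hardness-transfer argument. The crucial point is that a polynomial-time reduction does not by itself preserve parameterized complexity: one additionally needs the parameter of the produced instance to be bounded by a computable function of the parameter of the source instance. I would therefore first fix a parameterization of \ISil{} --- say by the number of leaves $\ell$, although the same argument works verbatim if one parameterizes by the number of vertices $i$, or by the pair $(i,\ell)$ --- and then verify this extra condition for $f$.

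Concretely, recall that $f(G,k) = (H, k+1, k)$, so the target leaf parameter is $\ell = k$, which is trivially bounded by the computable function $g(k) = k$ of the source parameter. Combined with the two facts already established in the theorem, namely that $f$ is computable in polynomial (hence fpt) time, and that $(G,k)$ is a yes-instance of \IndepSet{} if and only if $f(G,k)$ is a yes-instance of \ISil{}, this shows that $f$ is a genuine parameterized reduction from \IndepSet{} parameterized by solution size to \ISil{} parameterized by its number of leaves.

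Next I would use the cited fact that \IndepSet{} is $\mathrm{W[1]}$-complete, together with the closure of the class of $\mathrm{W[1]}$-hard problems under parameterized reductions (such reductions compose). This immediately gives that \ISil{} is $\mathrm{W[1]}$-hard. The corollary then follows by the usual collapse argument: if \ISil{} were in $\mathrm{FPT}$, then composing the reduction from any $\mathrm{W[1]}$ problem with an fpt-algorithm for \ISil{} would place that problem in $\mathrm{FPT}$, yielding $\mathrm{W[1]} \subseteq \mathrm{FPT}$; with the trivial inclusion $\mathrm{FPT} \subseteq \mathrm{W[1]}$ this forces $\mathrm{FPT} = \mathrm{W[1]}$. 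Contrapositively, if $\mathrm{FPT} \neq \mathrm{W[1]}$ then \ISil{} $\notin \mathrm{FPT}$.

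I expect the only genuine subtlety to lie in the first step: confirming that $f$ respects the parameter bound, so that it qualifies as an fpt-reduction rather than a plain Karp reduction. Everything downstream is the textbook hardness-transfer machinery, and no further combinatorial insight about induced subtrees is needed beyond what the theorem already supplies.
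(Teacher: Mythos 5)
Your proposal is correct and follows essentially the same route as the paper, which derives the corollary from the observation that the reduction $f(G,k)=(H,k+1,k)$ of the preceding theorem is in fact a parameterized reduction from the $\mathrm{W[1]}$-complete problem \IndepSet{}, since both target parameters are bounded by computable functions of $k$. Your explicit verification of the parameter-bounding condition and the standard hardness-transfer argument simply spell out what the paper leaves implicit.
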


We end this section by computing the function $L_G(i)$ for well known families of graphs. First, we consider classical families of finite graphs.
Proofs are omitted as they are straightforward.

\paragraph{Complete graphs $K_n$.}
For the complete graph with $n$ vertices,
$$L_{K_n}(i) = \begin{cases}
  0,       & \mbox{if $i = 0, 1$;} \\
  2,       & \mbox{if $i = 2$;} \\
  -\infty, & \mbox{if $3 \leq i \leq n$;}
\end{cases}$$
since any induced subgraph of $K_n$ with more than two vertices contains a cycle.

\paragraph{Cycles $\mathcal{C}_n$.}
For the cyclic graph $\mathcal{C}_n$ with $n$ vertices, we have
$$L_{\mathcal{C}_n}(i) = \begin{cases}
  0,       & \mbox{if $i = 0, 1$;} \\
  2,       & \mbox{if $2 \leq i < n$;} \\
  -\infty, & \mbox{if $i = n$.}
\end{cases}$$
\paragraph{Wheels $W_n$.} For the wheel $W_n$ with $n+1$ vertices,
$$L_{W_n}(i) = \begin{cases}
  0,       & \mbox{if $i = 0, 1$;} \\
  2,       & \mbox{if $i = 2$;} \\
  i-1,     & \mbox{if $3 \leq i \leq \lfloor\frac{n}{2}\rfloor+1$;} \\
  2,       & \mbox{if $\lfloor\frac{n}{2}\rfloor+2 \leq i \leq n - 1$;} \\
  -\infty, & \mbox{if $n \le i \le n+1$.}
\end{cases}$$
 \paragraph{Complete bipartite graphs $K_{p,q}$.}
For the complete bipartite graph $K_{p,q}$ with $p+q$ vertices,
$$L_{K_{p,q}}(i) = \begin{cases}
  0,       & \mbox{if $i = 0, 1$;} \\
  2,       & \mbox{if $i = 2$;} \\
  i-1,     & \mbox{if $3 \le i \le \max(p,q)+1$;} \\
  -\infty, & \mbox{if $\max(p,q)+2 \leq i \leq p+q$.}
\end{cases}$$

\paragraph{Hypercubes $Q_d$.} For the hypercube graph $Q_d$ with $2^d$ vertices, the computation of $L_{Q_d}$ is more intricate. Using the branch and bound algorithm described in Section~\ref{sec:algo} and implemented in \cite{GitHub}, we were  able to compute the values of the function $L_{Q_d}$ for $d \leq 6$ (see Table~\ref{tab:qn}).

\begin{table}[htbp]
\centering
\footnotesize
\begin{tabular}{c|ccccccccccccccccccccc|}
$n$&0&1&2&3&4&5&6&7&8&9&10&11&12&13&14&15&16&17\\
\hline
$L_{Q_2}(n)$&0&0&2&2&*&&&&&&&&&&&&&\\
$L_{Q_3}(n)$&0&0&2&2&3&2&*&*&*&&&&&&&&&\\
$L_{Q_4}(n)$&0&0&2&2&3&4&3&4&3&4&*&*&*&*&*&*&*&\\
$L_{Q_5}(n)$&0&0&2&2&3&4&5&4&5&6&6&6&7&7&7&8&8&8\\
$L_{Q_6}(n)$&0&0&2&2&3&4&5&6&5&6&7&8&8&9&9&10&10&11\\[5mm]
$n$&18&19&20&21&22&23&24&25&26&27&28&29&30&31&32&33&34&$\ldots$\\
\hline
$L_{Q_5}(n)$&*&*&*&*&*&*&*&*&*&*&*&*&*&*&*\\
$L_{Q_6}(n)$&11&12&12&13&13&14&14&15&15&16&16&17&17&18&18&18&*&$\ldots$\\
\end{tabular}
\caption{The leaf function $L_{Q_d}(i)$ for $2 \leq d \leq 6$. The symbol $*$ is used instead of $-\infty$ to gain some space.}
\label{tab:qn}
\end{table}%

\paragraph{Infinite planar lattices.}
Blondin Mass\'e \textit{et al.}\ have computed the map $L_{\Squ}(i)$, where $\Squ = (\Z^2,A_4)$ is the regular square lattice with respect to the $4$-adjacency relation $A_4$~\cite{Blondin-FPSAC}:
\begin{equation*}
  L_{\Squ}(i) = \begin{cases}
    0,                   & \mbox{if $i = 0,1$;} \\
    2,                   & \mbox{if $i = 2$;} \\
    i - 1,               & \mbox{if $i = 3,4,5$;} \\
    L_{\Squ}(i - 4) + 2, & \mbox{if $i \geq 6$.}
  \end{cases}\quad
\end{equation*}
A similar argument leads to the computation of $L_{\Hex}(i)$ and $L_{\Tri}(i)$ for the hexagonal and the triangular lattices:
\begin{equation*}
  L_{\Hex}(i) = \begin{cases}
    0,                   & \mbox{if $i = 0,1$;} \\
    2,                   & \mbox{if $i = 2, 3$;} \\
    L_{\Hex}(i - 2) + 1, & \mbox{if $i \geq 4$;} \\
  \end{cases}
  \qquad
  \text{and}
  \qquad
  L_{\Tri}(i) = \begin{cases}
    0,                   & \mbox{if $i = 0,1$;} \\
    2,                   & \mbox{if $i = 2, 3$;} \\
    L_{\Tri}(i - 2) + 1, & \mbox{if $i \geq 4$.} \\
  \end{cases}
\end{equation*}
In the three previous cases, the leaf functions verify linear recurrences. It is therefore easy to deduce that their asymptotic growth is $i/2$.  Notice that the functions $L_{\Hex}$ and $L_{\Tri}$ are identical.

\paragraph{The infinite cubic lattice.}
The authors of~\cite{Blondin-FPSAC} also gave the maximal number of leaves $L_{\Cub}(i)$  in induced subgraphs with $i$ vertices for the regular cubic lattice with respect to the $6$-adjacency relation. This leaf function also satisfies a linear recurrence with asymptotic growth $28i/41$ which is slightly larger than for the two-dimensional lattices.

\begin{equation*}\label{defl3}
  L_{\Cub}(i) = \begin{cases}
    0,                     & \mbox{if $i = 0,1$;} \\
  	f(i)+1,		             & \mbox{if $i=6,7,13,19,25$;}\\
    f(i),                  & \mbox{if $2\le i \le 40$ and $i\neq 6,7,13,19,25$;} \\
    f(i - 41)+28,          & \mbox{if $41\le i \le 84$;} \\
    L_{\Cub}(i - 41) + 28, & \mbox{if $i \ge 85$;}
  \end{cases}
\end{equation*}
where $f$ is the function defined by
\begin{equation*}
  f(i) = \begin{cases}
    \lfloor(2i+2)/3\rfloor,	& \mbox{if $0\le i\le 11$;} \\
    \lfloor(2i+3)/3\rfloor,  & \mbox{if $12\le i \le 27$;} \\
    \lfloor(2i+4)/3\rfloor, & \mbox{if $28\le i \le 40$.}
  \end{cases}
\end{equation*}
\noindent

\section{Computing the leaf function of a graph}\label{sec:algo}

We now describe a branch and bound algorithm that computes the leaf function $L_G(i)$ for an  arbitrary simple graph $G$. We propose an algorithm based on a data structure that we call an \emph{induced subtree configuration}.

\begin{definition}

Let $G = (V,E)$ be a simple graph and $\Gamma = \{\green,\yellow,\red,\blue\}$ be a set of colors with coloring functions $c : V \rightarrow \Gamma$. An \emph{induced subtree configuration} of $G$ is an ordered pair $C = (c,H)$, where $c$ is a coloring and $H$ is a stack of colorings called the \emph{history} of $C$.

All colorings $c : V \rightarrow \Gamma$ must satisfy the following conditions for any $u,v \in V$:
\begin{enumerate}
    \item[(i)] The subgraph induced by $c^{-1}(\green)$ is a tree;
    \item[(ii)] If $c(u) = \green$ and $\{u,v\} \in E$, then $c(v) \in \{\green, \yellow, \red\}$;
    \item[(iii)] If $c(u) = \yellow$, then $|c^{-1}(\green) \cap N(u)| = 1$, where $N(u)$ denotes the set of neighbors of $u$.
\end{enumerate}
The \emph{initial induced subtree configuration} of a graph $G$ is the pair $(c_\blue, H)$ where $c_\blue(v)=\blue$ for all $v\in G$ and $H$ is the empty stack. When the context is clear, $C$ is simply called a \emph{configuration}.
\end{definition}

Roughly speaking, a configuration is an induced subtree enriched with information that allows one to generate other induced subtrees either by extension, by exclusion or by backtracking.
The colors assigned to the vertices can be interpreted as follow. The \green{} vertices are the confirmed vertices to be included in a subtree. Since  each \yellow{}  vertex is connected to exactly one \green{} vertex, any  \yellow{}  vertex can be safely added to the green subtree to create a new induced subtree.
The \red{} vertices are those that are excluded from any possible tree extension. A \red{} vertex is excluded  by calling the operation $\Call{ExcludeVertex}{}$ which is defined below.
 The exclusion of a \red{} vertex is done either because it is adjacent to more than one \green{} vertex and its addition would create a cycle or because it is explicitly excluded for generation purposes.  Finally, the \blue{} vertices are available vertices that have not yet been considered  and that could be considered later.
For reasons that are explained in the next paragraphs, it is convenient to save in the stack $H$ the colorations from which $C$ was obtained.

Figure~\ref{fig:graph_border}(a) illustrates an induced subtree configuration. The  \green{} vertices and edges are outlining the induced subtree. The \yellow{} vertices and edges are showing the possible extensions of the \green{} tree. The vertices $14$ and $15$ are  \red{} because each one is connected to two \green{} vertices. Although the vertex $9$ is colored in \red{}, it would have been possible to color it in \yellow{} because  it is connected to exactly one \green{} vertex. Similarly, vertices $12$, $13$ and $16$ could be colored either in \blue{} or \red{} since they are not adjacent to the tree.

\begin{figure}[htb]
    \begin{center}
        \includegraphics[scale=0.8]{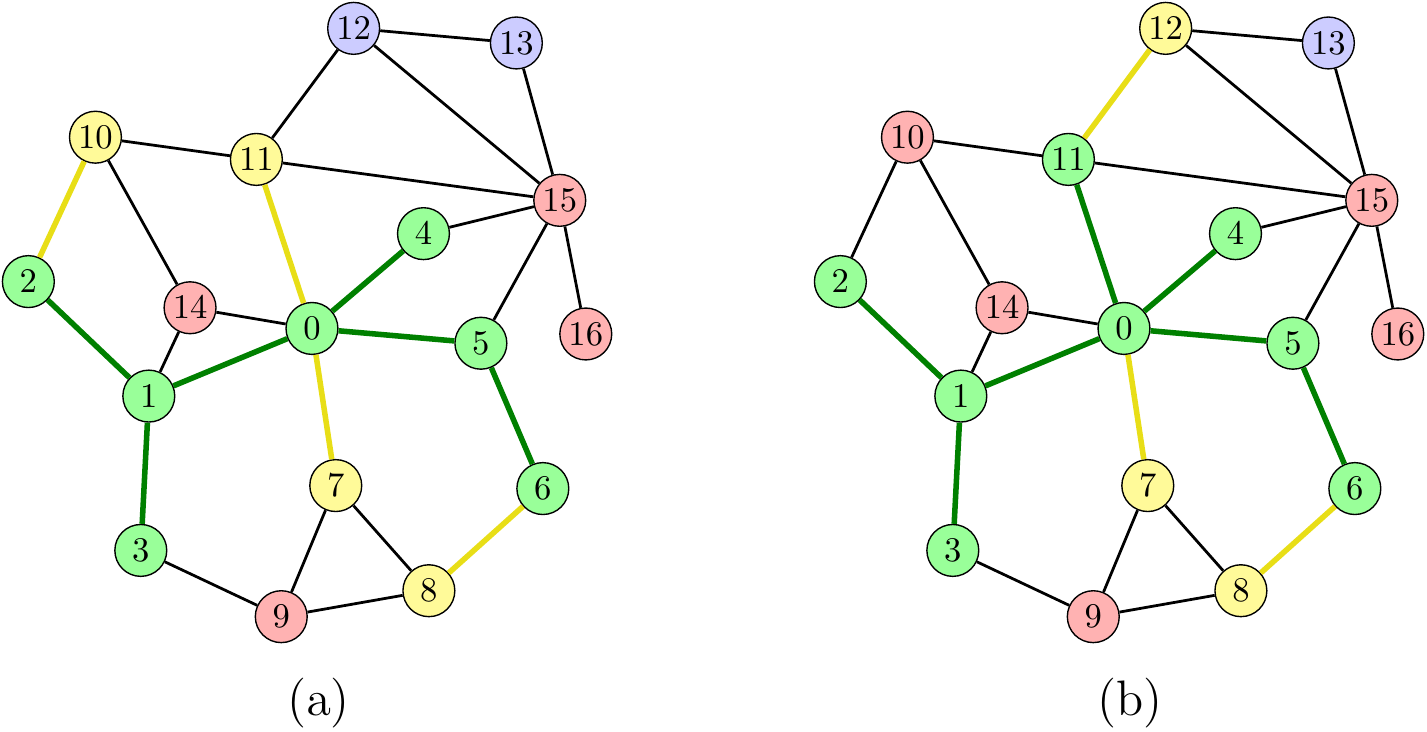}
    \end{center}
    \caption{Induced subtree configurations. The green edges outline the green subtree and the yellow edges outline the possible extensions. (a) A configuration $C$. (b) The configuration $C.\textsc{AddToSubtree}(11)$.}\label{fig:graph_border}
  \end{figure}
Let $C = (c, H)$ be a configuration of some simple graph $G = (V,E)$, with coloring $c$ and  stack $H$.
We consider the following operations  on $C$:

\begin{itemize}
  \item $C.\Call{VertexToAdd}{\null}$ is a non deterministic function that returns any non \green{} vertex in $G$ that can be safely colored in \green{}.
  If no such vertex exists, it returns \emph{none}. Note that the color of the returned vertex is always $\yellow$, except when $c^{-1}(\green)=\emptyset$, where the color is $\blue$.
  \item $C.\Call{AddToSubtree}{v}$ first pushes a copy of $c$ on top of $H$, sets the color of $v$ to \green{} and updates the colors of the neighborhood of $v$ accordingly.
  Notice that this operation is applied only to a vertex $v$ that can be safely colored in green.
  \item $C.\Call{ExcludeVertex}{v}$ first pushes a copy of $c$ on top of $H$ and then sets the color of $v$ to \red{}. This operation is applied only on a vertex $v$ such that $c(v) \in \{\yellow, \blue\}$.
  \item $C.\Call{Undo}{\null}$ retrieves and removes the top of $H$, then stores it into $c$. In other words, this operation  cancels the last operation applied on $C$, which is either an inclusion or an exclusion.
\end{itemize}
To illustrate these operations, let $C$ be the configuration in Figure~\ref{fig:graph_border}(a). Then $C.\textsc{VertexToAdd}{}()$ could return
 one of the  \yellow{} vertices $7$, $8$, $10$ or $11$. Let $C'$ be the configuration obtained from $C$ after calling $C.\Call{AddToSubtree}{11}$. Then we  have to update the colors of vertices $10$, $11$ and $12$ by setting $c(11) \gets \green$, $c(10) \gets \red$ and $c(12) \gets \yellow$,
 as illustrated in Figure~\ref{fig:graph_border}(b).
For any configuration $C$, we call $C'$ an \emph{extension} of $C$ when its coloration is obtained from $C$ without backtracking, i.e.  by using only $\Call{AddToSubtree}{v}$ and $\Call{ExcludeVertex}{v}$.

For optimization purposes, it is worth mentioning that it is not necessary to keep a complete copy of the colorations when they are saved in the history $H$. It is sufficient to store the vertex which caused a vertex to become $\red$ together with a stack of the vertices on which the operations was performed.
Keeping this optimization in mind, it is easy to show that the operations $\Call{AddToSubtree}{v}$ and $\Call{Undo}{}()$, in the case where the last operation is an inclusion of a vertex $v$, are done in $\bigo(\deg(v))$ time.
Also, the operations $\Call{ExcludeVertex}{v}$ and $\Call{Undo}{}()$, in the case where the last operation is an exclusion of a vertex $v$, are done in $\bigo(1)$ time.
Hence, we do not need to copy the whole coloring function at each step, but simply update the neighborhood of some vertex.

It is quite straightforward to use configurations for the generation of all induced subtrees of a graph $G$.
Starting with the initial configuration, it is sufficient to recursively build configurations by branching according to whether some vertex $v$ returned by the operation $C.\Call{VertexToAdd}{\null}$ is included or excluded from the current green tree.
Considering this process as a tree of configurations, the operation can be paired with edges of this tree. Therefore, a careful analysis shows that the generation runs in $O(|V|)$ amortized per solution.

While iterating over all possible configurations, if we want to compute the leaf function $L_G$, it is obvious that some configurations should be discarded whenever they cannot extend to interesting configurations.
Therefore, given an induced subtree configuration of $n$ \green{} vertices, we define the function $C.\Call{LeafPotential}{n'}$, for $n \leq n' \leq |V|$, which computes an upper bound on the number of leaves that can be reached by extending the current configuration $C$ to a configuration of $n'$ \green{} vertices.
First,  the  potential is  $-\infty$  for  $n'$ greater  than  the  size of  the
connected  component  $K$  containing  the green  subtree  (when  computing  the
connected component, we treat red vertices as removed from the graph).
Second, in order to  compute this upper  bound for $n'\leq |K|$, we  consider an
optimistic  scenario in
which all  \yellow{} and \blue{}  vertices that are close  enough can
safely  be  colored  in \green{} without creating a cycle,  whatever the order in which they are selected.
Keeping  this  idea  in  mind, we  start  by
partitioning the \emph{available} vertices, which are the \yellow{} and \blue{} vertices together with the
leaves  of the \green{} tree, according to
their distance  from the inner  vertices of the configuration subtree in $K$.
Algorithm~\ref{algo:leaf_potential} computes an upper bound for the number of leaves that can be realized from a configuration of $n$ \green{} vertices extended to a configuration of $n'$ \green{} vertices.

\begin{algorithm}[htb]
    \begin{algorithmic}[1]
      \Function{LeafPotential}{$C$ : configuration, $n'$ : natural}{: natural}
        \State $n\gets $ number of \green{} vertices
        \State $\ell \gets $ number of leaves in the \green{} subtree
        \State $y \gets $ number of \yellow{} vertices adjacent to an inner \green{} subtree vertex
        \If{$n+y\geq n'$}\label{line:complete-start}
            \State $(n, \ell)\gets (n',\ell + (n'-n))$
        \Else
            \State $(n,\ell) \gets (n+y,\ell+y)$
        \EndIf\label{line:complete-end}
        \State $d\gets 1$
        \While{$n < n'$ and there exists an available vertex at distance at most $d$}
            \State Let $v$ be an available vertex of highest degree
            \Comment{The degree does not count \red{} vertices}
            \If{$n+\deg(v)-1\leq n'$}\label{line:available-start}
                \State $(n,\ell) \gets (n+\deg(v)-1, \ell+\deg(v)-2)$\label{line:update}
            \Else
                \State $(n, \ell)\gets (n',\ell+(n'-n)-1)$
            \EndIf\label{line:available-end}
            \State Remove $v$ from available vertices
            \State $d\gets d+1$
        \EndWhile
        \State \Return $\ell$
      \EndFunction
    \end{algorithmic}
    \caption{Computation of the leaf potential for $n'$} \label{algo:leaf_potential}
\end{algorithm}

The first part of Algorithm~\ref{algo:leaf_potential} consists in \emph{completing} the \green{} subtree. More precisely, a configuration $C$ is called \emph{complete} if each \yellow{} vertex is adjacent to a leaf of the \green{} tree. We first verify if $C$ is complete and, when it is not the case, we increase $n$ and $\ell$ as if the \green{} subtree was completed  (Lines \ref{line:complete-start}--\ref{line:complete-end}).
Next, we choose a vertex $v$ among all available vertices within distance $d$. We assume that $v$ is \green{} and  update $n$ and $\ell$ as if all non-\green{} neighbors of $v$
 were leaves added to the current configuration (Lines \ref{line:available-start}--\ref{line:available-end}). This process is repeated until the size of the ``optimistic subtree'' reaches $n'$.
 Note that this process never decreases the values of $n$ and $\ell$. Indeed, in Line \ref{line:update}, the degree of $v$ is always greater than $1$ since $n'$ does not exceed the size of the connected component $K$.

\begin{remark}\label{rem:invariant}
 We note that $(n,\ell)$ in Algorithm~\ref{algo:leaf_potential} always satisfies the loop invariant $n-\ell = |I|+k$ where $I$ is the set of inner vertices of the \green{} subtree of $C$ and $k$ is the number of iterations of the loop. When Algorithm~\ref{algo:leaf_potential} ends after $k'$ iterations of the loop, the output is $n'-(|I|+k')$ as $n=n'$.
\end{remark}

We now prove that Algorithm~\ref{algo:leaf_potential} yields an upper bound on the maximum number of leaves that can be realized. It is worth mentioning that, in order to obtain a nontrivial bound, we restrict the available vertices to those that are within distance $d$ from the inner vertices of the current \green{} subtree,
 and then we increase the value of  $d$ at each iteration.

\begin{proposition}\label{P:upper}
Let $C$ be a configuration of a simple graph $G = (V,E)$ with $n \geq 3$ \green{} vertices and let $n'$ be an integer such that $n \leq n' \leq |V|$. Then any extension of $C$ to a configuration of $n'$ vertices has at most $C.\Call{LeafPotential}{n'}$ leaves, where $C.\Call{LeafPotential}{n'}$ is the operator described in Algorithm~\ref{algo:leaf_potential}.
\end{proposition}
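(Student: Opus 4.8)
The plan is to convert the statement into a lower bound on the number of non-leaf vertices. Write $\mathrm{In}(T')$ for the set of inner (non-leaf) vertices of a tree $T'$, and let $I$ denote the set of inner vertices of the \green{} subtree of $C$. By the loop invariant of Remark~\ref{rem:invariant}, if the loop performs $k'$ iterations then $C.\Call{LeafPotential}{n'} = n' - (|I| + k')$. Since any extension $T'$ of $C$ to $n'$ \green{} vertices is a tree, it has exactly $n' - |\mathrm{In}(T')|$ leaves, so the proposition is equivalent to the inequality $|\mathrm{In}(T')| \ge |I| + k'$. Thus I would reduce everything to proving that \emph{every} induced subtree of the component $K$ extending the current \green{} subtree to $n'$ vertices has at least $|I| + k'$ inner vertices.

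First I would record two structural facts. (a) $I \subseteq \mathrm{In}(T')$: a vertex that is inner in the \green{} subtree has degree at least $2$ there, hence still has degree at least $2$ in the larger induced subtree $T'$, so it stays inner. (b) The inner vertices of any tree induce a connected subtree, because the $T'$-path between two inner vertices uses only vertices of degree at least $2$. Consequently $\mathrm{In}(T')$ is a subtree of $K$ containing the inner subtree of $T$, and every inner vertex lying at distance $d$ from $I$ forces an entire length-$d$ path of inner vertices joining it to $I$. This is precisely why the \emph{distance} bookkeeping in Algorithm~\ref{algo:leaf_potential} is sound rather than merely optimistic.

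Next comes the combinatorial core. Counting edge endpoints in the subtree $\mathrm{In}(T')$ and bounding $\deg_{T'}(u) \le \deg'(u)$ (where $\deg'$ ignores \red{} vertices, exactly as in the algorithm) yields the capacity inequality $\sum_{u \in \mathrm{In}(T')}(\deg'(u) - 1) \ge n' - 2$: the inner tree must have enough free degree to host all $n'$ vertices. A direct computation shows that the vertices selected by the algorithm, namely $I$ together with the loop picks $v_1, \ldots, v_{k'}$, realize this capacity with equality, $\sum_{u \in I}(\deg'(u)-1) + \sum_{j=1}^{k'}(\deg'(v_j)-1) = n'-2$, while using exactly $|I| + k'$ vertices. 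I would then argue, by a greedy exchange, that among all \emph{admissible} inner sets — connected, containing the inner subtree of $T$, and meeting the capacity $\ge n'-2$ — the algorithm uses the fewest vertices: since at iteration $j$ it always recruits a vertex of maximum free degree among those reachable within distance $j$, and since an admissible set cannot place a high-degree vertex at distance $d$ without also including the $d-1$ intermediate inner vertices, the real set $\mathrm{In}(T')$ can never beat the algorithm's count. This gives $|\mathrm{In}(T')| \ge |I| + k'$, hence the proposition.

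The hard part will be exactly this last exchange argument, where the monotone growth of the radius $d$ must be reconciled with the greedy maximal-degree choice. The delicate points are: the pool of available vertices is replenished as the optimistic tree grows (leaves of the current \green{} tree are themselves available), so the layers are not fixed in advance; the final, capped iteration (the \textbf{else} branch updating $(n,\ell)\gets(n',\ell+(n'-n)-1)$) must be handled so the inequality survives when $n'$ is reached in the middle of a batch; and one must verify that charging each real inner vertex of $\mathrm{In}(T')$ to a distinct distance layer never lets $T'$ accumulate more leaf-producing inner vertices within the first $d$ layers than the algorithm credits at iteration $d$. Once this layer-by-layer domination is established, the completion phase (Lines~\ref{line:complete-start}--\ref{line:complete-end}), which only attaches \yellow{} neighbours of inner vertices as fresh leaves, is justified separately by the trivial observation that adding one vertex to an induced subtree raises the leaf count by at most one.
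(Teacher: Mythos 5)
Your reduction is the right one and matches the paper's: by Remark~\ref{rem:invariant} the output is $p' = n' - (|I| + k')$, and since a tree on $n'$ vertices has $n' - |\mathrm{In}(T')|$ leaves, the proposition is exactly the inequality $|\mathrm{In}(T')| \ge |I| + k'$. Your structural facts (a) and (b) are correct, your capacity inequality $\sum_{u \in \mathrm{In}(T')}(\deg'(u)-1) \ge n'-2$ is a valid handshake count, and your separate treatment of the completion phase corresponds to the paper's easy case $n'-n \le |Y|$. But the proposal stops precisely where the proof has to start: the ``greedy exchange'' establishing that no admissible inner set can beat the algorithm's count is announced, its difficulties are listed, and it is never carried out. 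That exchange \emph{is} the content of the proposition, so as written this is a plan, not a proof. (A smaller inaccuracy: your claim that the algorithm's picks meet the capacity bound \emph{with equality} is not quite right, since the initial \green{} tree need not be degree-saturated; the algorithm's accounting starts from the actual $(n,\ell)$ of $C$, not from $\sum_{u\in I}(\deg'(u)-1)$.)

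The paper closes the gap with a concrete device you should be able to adapt. Arguing by contradiction, let $v_1,\ldots,v_k$ be the vertices that become \emph{new} inner vertices in the actual extension, so $\ell' = n' - (|I|+k)$ and $\ell' > p'$ forces $k < k'$. Reorder the $v_i$ so that within each distance class from $I$ the degrees are non-increasing; connectivity of the inner set (your fact (b)) then guarantees that $v_i$ lies at distance at most $i$ from $I$. Since at iteration $i$ the algorithm selects a maximum-degree available vertex within radius $i$, this yields the termwise domination $\deg(v_i) \le \deg(v'_i)$ for $1 \le i \le k$. Each new inner vertex $v_i$ contributes at most $\deg(v_i)-2$ additional leaves, so
$$\ell' \le \ell + |Y| + \sum_{i=1}^{k}(\deg(v_i)-2) \le \ell + |Y| + \sum_{i=1}^{k'-1}(\deg(v'_i)-2) \le p',$$
a contradiction. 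This termwise comparison is exactly the ``layer-by-layer domination'' you identified as the hard part; note that it sidesteps your worry about the replenished pool of available vertices, because it compares the two \emph{sequences} of picks directly rather than comparing fixed layers. If you want to complete your version, this reordering-plus-distance bound is the missing lemma.
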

\begin{proof}
Let $\ell$ be the number of leaves of the \green{} subtree represented by $C$, $I$ be the set of inner vertices in the \green{} subtree
 and
$$Y=\{v : v \text{ is a \yellow{} vertex of $C$ at distance 1 from $I$}\}.$$
Let $p' = C.\Call{LeafPotential}{n'}$.
If $n'-n\le |Y|$, then $p'=\ell +n'-n$ and it is clear that adding $n'-n$ vertices cannot add more than $n'-n$ leaves.

Otherwise, we proceed by contradiction
by assuming that $C$ can be extended to a configuration $C'$ with $n'$ \green{} vertices and $\ell'$ leaves, with $\ell' > p'$. Let $v_1, v_2, ..., v_k$  be the sequence of vertices that became, in that order, inner vertices in the successive extensions of $C$ to reach $C'$.
Then we have $\ell'=n'-(|I|+k)$.
Let $v'_1, v'_2, \ldots, v'_{k'}$ be the vertices chosen by the procedure $C.\Call{LeafPotential}{n'}$.
It follows from Remark~\ref{rem:invariant} that $p'=n'-(|I|+k')$.
As we assumed that $\ell'>p'$, we obtain $k < k'$.

Without loss of generality, we assume that if $v_i$ and $v_j$ are at the same distance from  $I$ and $\deg(v_j)\leq \deg(v_i)$ then $i \leq j$ (otherwise, we simply swap any pair of vertices $v_i$ and $v_j$ that do not satisfy this condition).
Moreover, we know that $v_i$ is at most at distance $i$ from $I$.
Hence,
$$\deg(v_1)\leq \deg(v'_1),\; \deg(v_2)\leq \deg(v'_2),\; ...,\; \deg(v_k)\leq \deg(v'_k).$$
Therefore, for each new inner vertex $v_i$, only its neighbors can be included without adding an inner vertex. Similarly, including $v_i$ as an inner vertex implies that at most $\deg(v_i)-2$ leaves are gained. Taking into account the potential leaves found in $Y$, we conclude that
$$\ell' \leq \ell + |Y|+\sum_{i=1}^k (\deg(v_i)-2)  \leq \ell + |Y|+\sum_{i=1}^{k'-1} (\deg(v'_i)-2) \le p'$$
which is a contradiction, showing that the configuration $C'$ cannot exist.
\end{proof}

It follows from Proposition~\ref{P:upper} that a configuration $C$ of $n$ \green{} vertices and $r$ \red{} vertices cannot be extended to a configuration
whose subtree has more leaves than prescribed by the best values found for $L$ so far when
\begin{equation} \label{eq:promissing_condition}
    C.\Call{LeafPotential}{n'} \leq L(n') \text{ for all }n\leq n' \leq |K|.
\end{equation}

We conclude this section by presenting Algorithm~\ref{algo:l_computation}, which computes the function $L$ for an arbitrary simple graph $G$. The idea guiding this algorithm simply consists in generating all possible configurations,
 discarding those that cannot be extended to fully leafed induced subtrees.

\begin{algorithm}[htb]
    \begin{algorithmic}[1]
        \Function{LeafFunction}{$G$: graph}{: list}
            \Function{ExploreConfiguration}{\null}{}
                \State $u \gets C.\Call{VertexToAdd}{\null}$
                    \If{$u = \emph{none}$}
                        \State $i \gets$ the number of \green{} vertices in $C$
                        \State $\ell \gets$ the number of leaves in $C$
                        \State $L[i] \gets \max(L[i], \ell)$
                    \ElsIf{Inequation \eqref{eq:promissing_condition} is not satisfied}
                        \State $C.\Call{AddToSubtree}{u}$
                        \State $\Call{ExploreConfiguration}{\null}$
                        \State $C.\Call{Undo}{\null}$
                        \State $C.\Call{ExcludeVertex}{u}$
                        \State $\Call{ExploreConfiguration}{\null}$
                        \State $C.\Call{Undo}{\null}$
                    \EndIf
            \EndFunction
            \State Let $C$ be the initial configuration of $G$
            \State $L[0] \gets 0$
            \State $L[i] \gets -\infty$ for $i = 1,2,\ldots,\size{G}$
            \State $\Call{ExploreConfiguration}{\null}$
            \State \Return $L$
        \EndFunction
    \end{algorithmic}
    \caption{Leaf function computation} \label{algo:l_computation}
\end{algorithm}

Based on Proposition~\ref{P:upper} and the previous discussion, the following result is immediate.
\begin{theorem}
Let $G$ be any simple graph. Then Algorithm~\ref{algo:l_computation} returns the leaf function $L_G$ of $G$.
\end{theorem}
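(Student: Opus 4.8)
The plan is to split the correctness claim into a \emph{soundness} part and a \emph{completeness} part, maintaining throughout the execution the invariant that for every $i$ the table entry $L[i]$ is either $-\infty$ or the number of leaves of some genuine induced subtree of $G$ with $i$ vertices. Condition~(i) in the definition of a configuration guarantees that the \green{} vertices of any reachable configuration $C$ induce a tree, so whenever the algorithm reaches a terminal configuration (i.e.\ $C.\Call{VertexToAdd}{\null}$ returns \emph{none}) and executes $L[i]\gets\max(L[i],\ell)$, the value $\ell$ is the leaf count of a bona fide induced subtree of size $i$. Since every update takes a maximum against such a value, and $L[0]$ is initialised to $0$ (the empty tree), the invariant $L[i]\le L_G(i)$ holds at all times. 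This immediately yields $L\le L_G$ for the returned table.

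First I would establish completeness \emph{in the absence of pruning}. I would argue that the binary branching on $u=C.\Call{VertexToAdd}{\null}$ — recursing once after $C.\Call{AddToSubtree}{u}$ and once after $C.\Call{ExcludeVertex}{u}$ — produces, among its terminal configurations, one whose \green{} set equals $V(T)$ for every induced subtree $T$ of $G$: given $T$, one reaches such a terminal configuration by colouring the vertices of $T$ \green{} and every neighbour of $T$ outside $T$ \red{}, after which no vertex can be safely added. Consequently the pruning-free recursion records $\leaf{T}$ for every induced subtree $T$, so that $L[i]=L_G(i)$ for all $i$. This formalises the informal enumeration claim already made before Algorithm~\ref{algo:leaf_potential}.

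The heart of the proof is to show that switching on the pruning test~\eqref{eq:promissing_condition} never lowers any final entry below $L_G$. Fix $i$ and a fully leafed induced subtree $T^\ast$ with $i$ vertices and $L_G(i)$ leaves, and consider the path of configurations $C_0\to C_1\to\cdots\to C_t$ from the initial configuration to a terminal configuration with \green{} set $V(T^\ast)$. I would argue the dichotomy: either no $C_j$ on this path triggers pruning, in which case $\leaf{T^\ast}=L_G(i)$ is recorded and we are done; or some $C_j$ is pruned. In the latter case, $C_j$ extends to $T^\ast$, which is an extension of $C_j$ to $i$ vertices with $L_G(i)$ leaves; since the vertices of $T^\ast$ are never coloured \red{} along the path, they all lie in the relevant component $K_j$, so $i\le|K_j|$. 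Proposition~\ref{P:upper} then gives $C_j.\Call{LeafPotential}{i}\ge L_G(i)$, while pruning forces $C_j.\Call{LeafPotential}{i}\le L[i]$ at that moment; hence $L[i]\ge L_G(i)$, and combined with the soundness invariant $L[i]\le L_G(i)$ this forces $L[i]=L_G(i)$ already. As $L[i]$ never decreases, the final value equals $L_G(i)$ in every case, and together with $L\le L_G$ we conclude $L=L_G$.

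The main obstacle I anticipate is precisely this interaction between the pruning threshold and the \emph{dynamic}, order-dependent values stored in $L$: because~\eqref{eq:promissing_condition} compares the potential against the running best rather than against $L_G$, one must rule out that a promising branch is cut before any branch records the optimal value — which is exactly what the inequality chain above handles. However, that chain invokes Proposition~\ref{P:upper}, which is stated only for configurations with $n\ge 3$ \green{} vertices, so I would need a small separate check that no harmful pruning occurs at configurations with $n<3$ (equivalently, that $\Call{LeafPotential}{}$ remains a valid upper bound there, or that $L[0],L[1],L[2]$ are correct and such configurations are never pruned to the detriment of a larger $n'$). A secondary point to verify is that restricting the pruning range to $n'\le|K|$ is legitimate, since no extension of $C$ can exceed $|K|$ \green{} vertices and the potential is precisely an upper bound on leaves within that component.
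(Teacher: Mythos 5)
Your proof is correct and follows exactly the route the paper intends: the paper's own ``proof'' is a one-line appeal to Proposition~\ref{P:upper} and the preceding discussion, and your soundness/completeness/pruning-safety decomposition is precisely the fleshed-out version of that argument. The caveat you flag about Proposition~\ref{P:upper} being stated only for $n\ge 3$ \green{} vertices is a legitimate (if minor) loose end that the paper itself does not address.
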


\begin{figure}[htb]
 \begin{center}
 \begin{tabular}{ccc}
     \includegraphics[scale=0.9]{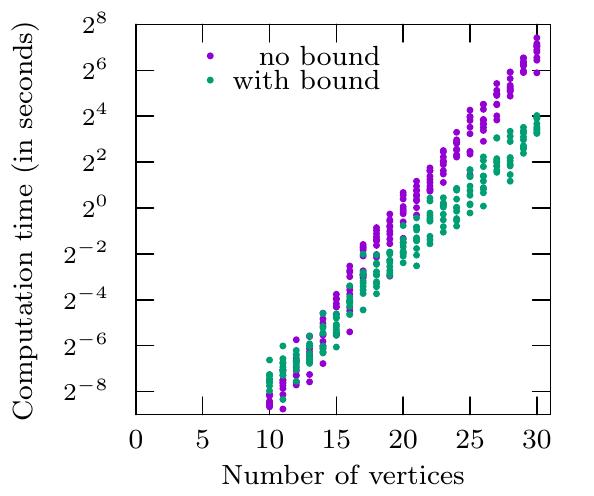} & &
     \includegraphics[scale=0.9]{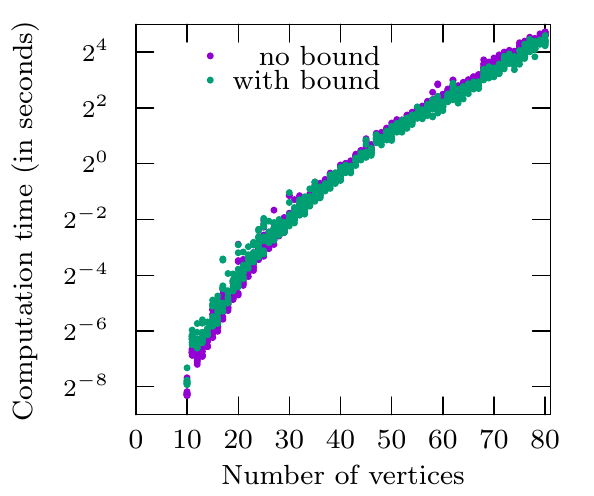}\\
  (a) & & (b) \\
  \includegraphics[scale=0.9]{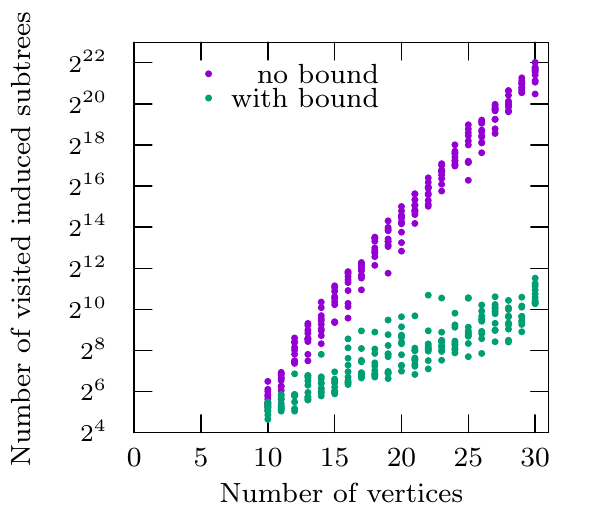} & &
  \includegraphics[scale=0.9]{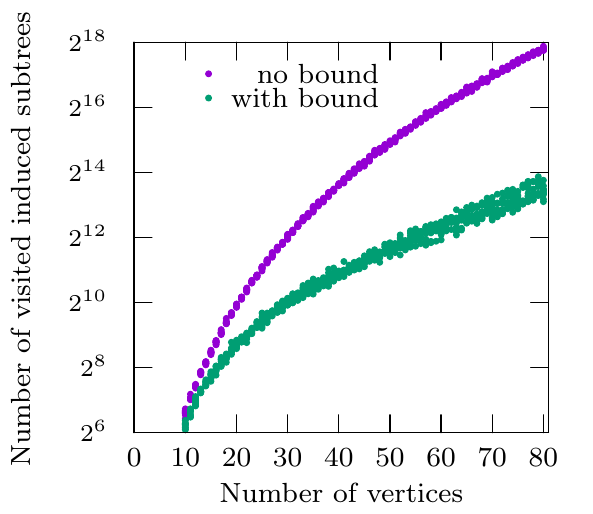}\\
  (c) & & (d) \\
 \end{tabular}
 \caption{
 The running time of Algorithm \ref{algo:l_computation} on 10 randomly generated graphs with density $0.2$ (a) and density $0.8$ (b), with or without using the leaf potential bound. The corresponding number of induced subtrees for density $0.2$ (c) and density $0.8$ (d) that are visited during the execution.
 }\label{fig:density}
 \end{center}
\end{figure}

Empirically, we observed the following elements.
First, it seems that the overall time performance is significantly better on dense graphs.
More precisely, for a fixed number of vertices, the computation of the leaf function is faster on a dense graph than on a sparse one (see Figure~\ref{fig:density}(a-b)).
This is not surprising, since if one takes a vertices subset of a dense graph, the probability that these vertices induce at least one cycle is high.
Therefore, the number of visited induced subtrees is smaller.
For example, experimental data show that the number of visited subtrees in a graph with $30$ vertices and expected density $0.1$ is still around ten times greater than the number of visited subtrees in a graph with $80$ vertices and expected density $0.9$.
Figure~\ref{fig:subtree-by-prob} illustrates the number of induced subtrees according to density in random graphs.

\begin{figure}[htb]
	\centering
        \includegraphics[scale=1]{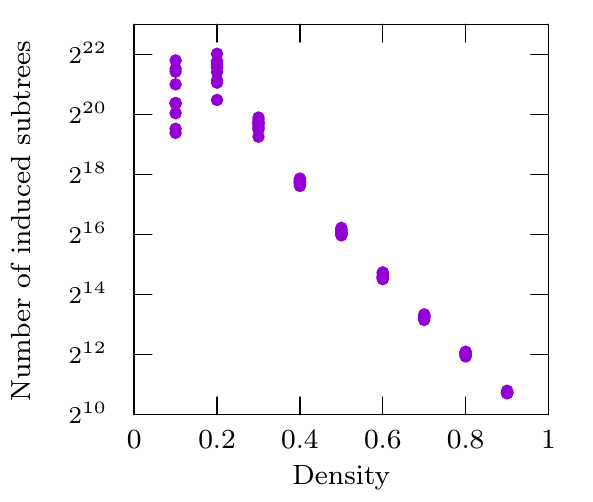}
	\caption{Number of induced subtrees in graphs with 30 vertices, randomly generated according to density.}\label{fig:subtree-by-prob}
\end{figure}

Moreover, the leaf potential bound always reduces the number of visited subtrees regardless of the density.
However, the difference is more pronounced on lower density graphs (see Figure~\ref{fig:density}(c-d)).
This also seems easily explainable: It is expected that, as the density decreases, the number of layers in the vertices partition increases and the degrees of the vertices diminish.
Hence, when we use the leaf potential as a bounding strategy, the computation time gain is more significant on sparse graphs.

Hence, for lower density graphs, the leaf potential improves the algorithm and the overall performance of the algorithm.
For higher density, no significant difference in time performance with or without the usage of the bound is observed.
Finally, from an empirical point of view,
 the number of  visited induced subtrees seems to indicate an overall complexity of the algorithm in $\bigo(\alpha^n)$ with $\alpha<2$.
Unfortunately, we were unable to prove such an upper bound.


\section{Fully leafed induced subtrees of trees}\label{sec:tree}

It turns out that the \MLIS{} problem can be solved in polynomial time when it is restricted to the class of trees.
Observe that since all subtrees of trees are induced subgraphs, we could omit the ``induced'' adjective, but we choose to keep the expression for sake of uniformity.

A naive strategy consists in successively deleting suitable leaves to obtain a sequence of fully leafed  subtrees embedded in each other.  Such a strategy is not viable. Indeed, consider the tree $T$ represented in Figure~\ref{fig:tree2}. We have $L_T(9)=6$ and $L_T(7)=5$ and there is exactly one fully leafed induced subtree of $T$ with respectively 7 and 9 vertices. But the smallest of these two subtrees (in blue) is not a subgraph of the largest one (in red).
\begin{figure}[htbp]
\begin{center}
\includegraphics[scale=0.8]{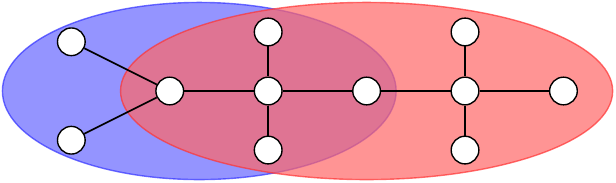}
\end{center}
\caption{A tree with its unique fully leafed induced subtrees with $7$ (respectively $9$) vertices in the blue (resp. red) area.}
\label{fig:tree2}
\end{figure}

Hereafter, we describe an algorithm with polynomial time complexity based on the dynamic programming paradigm but before, we recall some definitions.
A \emph{rooted tree}  is a couple $\widehat{T}=(T,u)$ where $T=(V,E)$ is a tree and $u\in V$ is a distinguished vertex called the \emph{root} of $\widehat{T}$. Rooted trees have a natural orientation with arcs pointing away from the root.  A \emph{leaf} of a rooted tree is a vertex $v$ with outdegree $\deg^+(v) = 0$.  In particular, if a rooted tree consists in a single vertex, then this vertex is a leaf. The functions $\size{\widehat{T}}$ and $\leaf{\widehat{T}}$ are defined accordingly by
$$\size{\widehat{T}}=\size{T}\text{ and }\leaf{\widehat{T}} = \left|\left\{v \in \widehat{T} : \deg^+(v) = 0\right\}\right|.$$
Similarly, a \emph{rooted forest} $\widehat{F}$ is a collection of rooted trees.
It follows naturally that
$$\leaf{\widehat{F}} = \sum_{\mbox{$\widehat{T} \in \widehat{F}$}} \leaf{\widehat{T}}.$$
The \emph{rooted forest induced by a rooted tree} $\widehat{T}=(T,u)$  is the set of rooted trees obtained by removing from $T$ the root $u$ and its incident edges so that the $k$ vertices adjacent to $u$ become roots of the trees $\widehat{T}_i$.
Let $\widehat{T}$ be any rooted tree with $n$ vertices and $L_{\widehat{T}} : \{0,1,\ldots,n\} \rightarrow \N$ be defined by
\begin{equation*}
L_{\widehat{T}}(i) = \max\left\{\leaf{\widehat{T'}} \:\big\vert\: \widehat{T'} \preceq \widehat{T}\text{ and } \size{\widehat{T'}} = i\right\},
\end{equation*}
where $\preceq$ denotes the relation ``being a rooted subtree with the same root''.
Roughly speaking, $L_{\widehat{T}}(i)$ is the maximum number of leaves that can be realized by some rooted subtree of size $i$ of $\widehat{T}$. This map is naturally extended to rooted forests so that for a rooted forest   $\widehat{F}=\{\widehat{T_1},\ldots,\widehat{T_k}\}$
we set
\begin{equation}\label{EQ:L-forest-def}
L_{\widehat{F}}(i)=\max\left\{\sum_{j=1}^k \leaf{\widehat{T'_j}} \:\big\vert\: \widehat{T'_j}\preceq\widehat{T_j} \text{ and }\sum_{j=1}^k \size{\widehat{T'_j}}=i\right\}.
\end{equation}
Let $C(i,k)$ be the set of all weak compositions $\La=(\La_1,\ldots ,\La_k)$ of $i$ in $k$ nonnegative parts. Then Equation~\eqref{EQ:L-forest-def} is equivalent to
\begin{equation}\label{EQ:L-forest}
L_{\widehat{F}}(i) = \max \left\{\sum_{j=1}^k L_{\widehat{T_j}}(\lambda_j)~:~{\lambda \in C(i,k)}\right\}.
\end{equation}
Assuming that $L_{\widehat{T_j}}$ is known for $j = 1,2,\ldots,k$, a naive computation of $L_{\widehat{F}}$ using Equation~\eqref{EQ:L-forest} is not done in polynomial time, since
$$|C(i,k)| = \binom{i + k - 1}{i}.$$
Nevertheless, the next lemma shows that $L_{\widehat{F}}$ can be computed in polynomial time.
\begin{lemma}\label{L:forest-tree}
Let $k\ge1$ be an integer and $\widehat{F}=\{\widehat{T_1},\ldots,\widehat{T_k}\}$ be a rooted forest with $n$ vertices.
Then, for $i\in\{0,\ldots,n\}$,
\begin{equation}\label{EQ:L-forest-efficient}
L_{\widehat{F}}(i) =
    \begin{cases}
      L_{\widehat{T_1}}(i), & \text{if }k=1;\\
      \max\{L_{\widehat{T_1}}(j) + L_{\widehat{F'}}(i - j) : \max\{0,i-\size{\widehat{F'}}\} \leq j \leq \min\{i,\size{\widehat{T_1}}\}\}, &\text{if }k\ge2;\\
    \end{cases}
\end{equation}
where $\widehat{F'}=\{\widehat{T_2},\ldots,\widehat{T_k}\}$. Therefore, if $L_{\widehat{T_j}}$ is known for $j = 1,2,\ldots,k$, then $L_{\widehat{F}}$ can be computed in $\bigo(kn^2)$ time.
\end{lemma}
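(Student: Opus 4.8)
The plan is to prove the recursive identity directly from the definition of $L_{\widehat{F}}$ in Equation~\eqref{EQ:L-forest}, and then to read off the complexity from an iterative evaluation of that recursion. The case $k=1$ is immediate, since $\widehat{F}=\{\widehat{T_1}\}$ forces $L_{\widehat{F}}(i)=L_{\widehat{T_1}}(i)$ directly from~\eqref{EQ:L-forest}. So the content lies in the case $k\ge 2$.

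For $k\ge 2$, I would partition the set of weak compositions $C(i,k)$ according to the value of the first part. Writing $j=\lambda_1$, a composition $\lambda=(\lambda_1,\ldots,\lambda_k)\in C(i,k)$ is the same datum as a choice of $j\in\{0,\ldots,i\}$ together with a composition $(\lambda_2,\ldots,\lambda_k)\in C(i-j,k-1)$ of the remaining mass. Since the summand $L_{\widehat{T_1}}(j)$ does not depend on $(\lambda_2,\ldots,\lambda_k)$, the maximum in~\eqref{EQ:L-forest} factors as
$$L_{\widehat{F}}(i)=\max_{j}\left\{L_{\widehat{T_1}}(j)+\max_{\mu\in C(i-j,k-1)}\sum_{m=2}^{k}L_{\widehat{T_m}}(\mu_m)\right\}=\max_{j}\left\{L_{\widehat{T_1}}(j)+L_{\widehat{F'}}(i-j)\right\},$$
where the inner maximum is exactly $L_{\widehat{F'}}(i-j)$ by applying~\eqref{EQ:L-forest} to $\widehat{F'}=\{\widehat{T_2},\ldots,\widehat{T_k}\}$. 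This is the desired recursion, up to specifying the range of $j$.

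It then remains to justify the summation bounds $\max\{0,i-\size{\widehat{F'}}\}\le j\le\min\{i,\size{\widehat{T_1}}\}$. These are precisely the constraints ensuring that both terms are finite: $L_{\widehat{T_1}}(j)$ is defined only for $0\le j\le\size{\widehat{T_1}}$, while $L_{\widehat{F'}}(i-j)$ requires $0\le i-j\le\size{\widehat{F'}}$, that is $j\ge i-\size{\widehat{F'}}$. Any $j$ outside the stated interval makes at least one factor equal to $-\infty$ and hence cannot improve the maximum, so restricting to the interval leaves the value unchanged; and for $0\le i\le n$ the interval is nonempty, so no spurious $-\infty$ is introduced. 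The one point that needs care here is the boundary convention at $j=0$ and $i-j=0$, namely whether the empty rooted subtree is admissible and what $\max\emptyset=-\infty$ yields there, together with checking consistency with the weak-composition formulation that permits zero parts. I expect this bookkeeping with the $-\infty$ values to be the only genuinely delicate part of the argument.

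Finally, for the complexity, I would evaluate the recursion iteratively by peeling off one tree at a time. Processing $\widehat{T_1},\ldots,\widehat{T_k}$ from right to left, one maintains the array $\bigl(L_{\widehat{G}_m}(i)\bigr)_{0\le i\le n}$ for the suffix forest $\widehat{G}_m=\{\widehat{T_m},\ldots,\widehat{T_k}\}$, obtaining $L_{\widehat{G}_m}$ from $L_{\widehat{T_m}}$ and $L_{\widehat{G}_{m+1}}$ by one application of~\eqref{EQ:L-forest-efficient}. Each such merge fills at most $n+1$ entries, and each entry is a maximum over at most $n+1$ admissible values of $j$, so it costs $\bigo(n^2)$; performing the $k-1$ merges gives the claimed $\bigo(kn^2)$ bound. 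A sharper accounting, using that the $j$-range for the $m$-th merge has length $\bigo(\size{\widehat{T_m}})$ and that $\sum_m\size{\widehat{T_m}}=n$, even yields $\bigo(n^2)$, but $\bigo(kn^2)$ already suffices for the intended application.
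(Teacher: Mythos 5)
Your proposal is correct and follows essentially the same route as the paper: the paper also proves the recursion by decomposing $C(i,k)$ according to the first part $j$ of the composition and then distributing the maximum, and it obtains the $\bigo(kn^2)$ bound by the same count of $k-1$ merges, each costing $\bigo(n)$ per value of $i$. Your additional care about the range of $j$ and the $-\infty$ convention, and the remark that a sharper accounting gives $\bigo(n^2)$, are refinements of the same argument rather than a different approach.
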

\begin{proof}
The first part follows from Equation~\eqref{EQ:L-forest} and the fact that, for $k \geq 2$, we have
$$C(i,k) = \{(j, \lambda_1, \lambda_2, \ldots, \lambda_{k-1}) : 0 \leq j \leq i,
(\lambda_1, \lambda_2, \ldots, \lambda_{k-1}) \in C(i-j,k-1)\}.$$

For the time complexity, one notices that for a given $i$, the recursive step of Equation~\eqref{EQ:L-forest-efficient} is applied $k-1$ times, where each step is done in $\bigo(n)$. Since $L_{\widehat{F}}(i)$ is computed for $i = 1,2,\ldots,n$, the total time complexity is $\bigo(kn^2)$.
\end{proof}

Finally, we describe how  $L_{\widehat{T}}$ is computed from the children of its root.
\begin{lemma}\label{L:tree-forest}
Let $\widehat{T}$ be some rooted tree with root $u$. Let $\widehat{F}$ be the rooted forest induced by the children of $u$. Then
\begin{equation*}
L_{\widehat{T}}(i) = \begin{cases}
  i,                      & \mbox{if $i = 0,1$;} \\
  L_{\widehat{F}}(i - 1), & \mbox{if $2\le i\le \size{\widehat{T}}$.}
\end{cases}
\end{equation*}
\end{lemma}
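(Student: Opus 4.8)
The plan is to prove Lemma~\ref{L:tree-forest} by a direct case analysis on $i$, establishing a bijective correspondence between rooted subtrees of $\widehat{T}$ sharing its root and rooted subforests of the induced forest $\widehat{F}$. First I would dispose of the base cases. For $i=0$, the only rooted subtree of $\widehat{T}$ with the same root and $0$ vertices is the empty tree, which has $0$ leaves, so $L_{\widehat{T}}(0)=0$. For $i=1$, the unique rooted subtree with the same root $u$ and exactly one vertex is the singleton $\{u\}$; by the conventions recalled in the excerpt, a single-vertex rooted tree consists of one leaf, so $\leaf{\widehat{T}'}=1$ and hence $L_{\widehat{T}}(1)=1$. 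This matches the stated value $i$ in both cases.

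For the main case $2\le i\le\size{\widehat{T}}$, the key observation is a structural bijection. Any rooted subtree $\widehat{T}'\preceq\widehat{T}$ with $i\ge 2$ vertices must contain the root $u$ together with at least one of its children, and removing $u$ from $\widehat{T}'$ yields a rooted subforest $\widehat{F}'$ of $\widehat{F}$ with $i-1$ vertices; conversely, any rooted subforest $\widehat{F}'\preceq\widehat{F}$ on $i-1$ vertices, reattached to $u$, gives a valid rooted subtree of $\widehat{T}$ on $i$ vertices. The crucial point is how the leaf counts transform under this bijection. Since $i\ge 2$, the root $u$ has at least one child in $\widehat{T}'$, so $u$ is an inner vertex of $\widehat{T}'$ (its outdegree is positive) and contributes nothing to $\leaf{\widehat{T}'}$. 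Moreover, the roots of the trees in $\widehat{F}'$ — the children of $u$ selected in $\widehat{T}'$ — have exactly the same outdegree in $\widehat{F}'$ as in $\widehat{T}'$, and every other vertex likewise retains its outdegree. Hence the leaf status of every vertex of $\widehat{F}'$ is unchanged when it is viewed inside $\widehat{T}'$, giving $\leaf{\widehat{T}'}=\leaf{\widehat{F}'}$.

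Taking the maximum over all such $\widehat{T}'$ of size $i$ and transporting it across the bijection to a maximum over all $\widehat{F}'$ of size $i-1$ then yields $L_{\widehat{T}}(i)=L_{\widehat{F}}(i-1)$, as claimed.

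The step I expect to require the most care is verifying that $u$ genuinely contributes no leaf and that no vertex of $\widehat{F}'$ changes its leaf status in passing between $\widehat{F}'$ and $\widehat{T}'$. The subtle point is the singleton convention: when a selected child $w$ of $u$ is itself a leaf of $\widehat{F}'$ (outdegree $0$ in its tree of $\widehat{F}'$), it remains a leaf in $\widehat{T}'$, because the edge $uw$ points \emph{away} from $u$ and therefore does not increase the outdegree of $w$. One must also confirm the boundary condition $i\ge 2$ is exactly what forces $\deg^+(u)\ge 1$ in $\widehat{T}'$, ruling out the degenerate case where $u$ would itself count as a leaf; this is precisely why the case $i=1$, where $u$ \emph{is} a leaf, must be handled separately in the base cases above.
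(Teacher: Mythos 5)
Your proof is correct and follows essentially the same route as the paper's: the paper's argument is just the two-sentence observation that for $i\ge 2$ the root $u$ must be included and is not a leaf, so all leaves live in $\widehat{F}$, which is exactly the bijection you spell out in detail. The extra care you take with outdegrees and the singleton convention is a faithful elaboration of the same idea, not a different approach.
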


\begin{proof}
The cases $i = 0,1$ are immediate. Assume that $i \geq 2$. Since any rooted subtree of $\widehat{T}$ must, in particular, include the root $u$ and since $u$ is not a leaf, all the leaves are in $\widehat{F}$ and the result follows.
\end{proof}

Combining Lemmas~\ref{L:forest-tree} and~\ref{L:tree-forest}, we obtain the following result.
\begin{theorem}\label{T:trees}
Let $T = (V,E)$ be an unrooted tree with $n\ge2$ vertices. Then $L_T$ can be computed in $\bigo(n^3\Delta)$ time and $\bigo(n^2)$ space where $\Delta$ denotes the maximal degree of a vertex in $T$.
\end{theorem}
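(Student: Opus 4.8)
The plan is to compute $L_T$ for an unrooted tree $T$ by reducing it to the rooted case, where Lemmas~\ref{L:forest-tree} and~\ref{L:tree-forest} already give us the essential recurrences. First I would pick an arbitrary root $u \in V$, turning $T$ into a rooted tree $\widehat{T} = (T,u)$. The key observation linking the rooted quantity $L_{\widehat{T}}$ to the unrooted quantity $L_T$ is that any induced subtree of $T$ with $i$ vertices, say $T'$, can be viewed as a rooted subtree of $\widehat{T_v}$ where $v$ is the vertex of $T'$ closest to $u$ (equivalently, the highest vertex of $T'$ in the rooting), and for that choice $T'$ becomes a rooted subtree with root $v$. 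Hence $L_T(i) = \max_{v \in V} L_{\widehat{T_v}}(i)$, where $\widehat{T_v}$ denotes the subtree of $\widehat{T}$ hanging below $v$ (including $v$), rooted at $v$. This max-over-roots identity is what converts the rooted computation into the unrooted one.

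Next I would organize the dynamic programming over the rooted tree $\widehat{T}$ in a single bottom-up (postorder) traversal. For each vertex $v$, let $\widehat{F_v}$ be the rooted forest induced by the children of $v$. Processing vertices from the leaves upward, I would first compute $L_{\widehat{F_v}}$ from the tables $L_{\widehat{T_c}}$ of the children $c$ of $v$ using the incremental recurrence of Equation~\eqref{EQ:L-forest-efficient} (folding in one child at a time), and then obtain $L_{\widehat{T_v}}$ from $L_{\widehat{F_v}}$ via Lemma~\ref{L:tree-forest}. Each table $L_{\widehat{T_v}}$ is an array of size $\size{\widehat{T_v}}+1 \le n+1$, so storing one table per vertex would naively cost $\bigo(n^2)$ space; since we only need each child's table while folding it into its parent, the tables can be discarded after use, which is how the claimed $\bigo(n^2)$ space bound is met. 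Finally, $L_T(i) = \max_v L_{\widehat{T_v}}(i)$ is read off by taking, for each $i$, the maximum over all vertices of the corresponding entry.

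The time analysis is where the factors in $\bigo(n^3\Delta)$ come from, and I would present it carefully. Fix the root $v$ and consider folding its children into the forest table via Lemma~\ref{L:forest-tree}: with $k = \deg(v) \le \Delta$ children and subtree size at most $n$, this costs $\bigo(\Delta n^2)$ for a single vertex. Summing over all $n$ vertices gives a coarse bound of $\bigo(n \cdot \Delta n^2) = \bigo(n^3 \Delta)$, which matches the statement. (A sharper amortized argument replacing $n$ by $\size{\widehat{T_v}}$ in the per-vertex cost would tighten the polynomial, but the stated bound follows from the coarse estimate and that is all we need.) The per-index combination step producing $L_T$ from the $n$ rooted tables costs $\bigo(n^2)$, which is absorbed.

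The main obstacle I anticipate is justifying the max-over-roots identity $L_T(i) = \max_v L_{\widehat{T_v}}(i)$ rigorously, that is, arguing that every unrooted induced subtree is counted by exactly one rooted table and conversely that every rooted subtree of every $\widehat{T_v}$ is a legitimate induced subtree of $T$ with the same leaf count. The subtlety is the notion of ``leaf'': a vertex that is a leaf of the unrooted subtree $T'$ must correspond to an outdegree-zero vertex of the rooted version, and the root $v$ of $T'$ is a leaf of $T'$ precisely when $\size{T'} = 1$ or when $v$ has a single child in $T'$. Lemma~\ref{L:tree-forest} already encodes the boundary behavior at the root (treating $i=0,1$ separately so that a singleton counts as a leaf while a root with children does not), so the correspondence is leaf-preserving once one checks that the ``highest vertex'' of $T'$ is well defined and unique, which holds because $T'$ is connected and $\widehat{T}$ is a tree. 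Verifying this bijection between unrooted induced subtrees and rooted subtrees-with-fixed-root, together with confirming that it preserves both vertex count and leaf count, is the crux; everything else is the bookkeeping of the traversal and the complexity summation.
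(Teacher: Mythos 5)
There is a genuine gap, and it sits exactly at the point you flagged as the crux: the identity $L_T(i) = \max_{v} L_{\widehat{T_v}}(i)$ is false. When the highest vertex $v$ of an unrooted subtree $T'$ has exactly one child in $T'$ (and $\size{T'}\ge 2$), $v$ is a leaf of the unrooted tree $T'$ but has outdegree $1$ in the rooted version, so it is \emph{not} counted by $L_{\widehat{T_v}}$; Lemma~\ref{L:tree-forest} only rescues the singleton case $i=1$, not this one. A concrete counterexample is the path $P_n$ rooted at an endpoint: every subpath with $i\ge 2$ vertices has $2$ leaves as an unrooted tree, but its highest vertex always has exactly one child, so $\max_v L_{\widehat{T_v}}(i)=1$ while $L_T(i)=2$. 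Your bijection between unrooted subtrees and rooted-subtrees-with-fixed-root preserves vertex count but not leaf count, so the final combination step systematically undercounts.

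The paper avoids this by anchoring each subtree with at least two vertices at an \emph{edge} it contains rather than at its highest vertex. Deleting an edge $\{u,v\}$ splits $T$ into two rooted subtrees $\widehat{T}(v\to u)$ and $\widehat{T}(u\to v)$, and a subtree through $\{u,v\}$ decomposes into one nonempty rooted piece on each side; then $L_{\{u,v\}}(i)$ is a max over splits $j + (i-j)$ with $j\ge 1$ and $i-j\ge 1$, and $L_T(i)$ is the max over edges. With this decomposition the boundary accounting is correct: if $u$'s side is a single vertex, then $u$ is a leaf of the whole subtree and the $i=1$ case of Lemma~\ref{L:tree-forest} contributes exactly one leaf; if $u$ has children on its side, it also has the neighbor $v$, so it is an internal vertex both in the rooted piece and in the unrooted subtree. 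Your approach could be repaired, for instance by maintaining a second table distinguishing whether the root of the rooted subtree has exactly one child (adding $1$ in that case), but as written the key identity fails. Incidentally, a correct single-rooting scheme would give $\bigo(n^3)$ since $\sum_v \deg(v) = 2(n-1)$; the extra $\Delta$ in the paper's bound arises precisely because its edge-based scheme recomputes forest mergers for each of the $\deg(u)$ arcs leaving a vertex $u$.
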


\begin{proof}
Removing any edge $\{u,v\}\in E$  from $T$ gives two subtrees of $T$ and identifying $u$ and $v$ as the roots of these two subtrees allows to recover the edge $\{u,v\}$ and the tree $T$ from them. Therefore we consider
 the  two rooted subtrees  $\widehat{T}(v \rightarrow u)$ rooted in $u$ and   $\widehat{T}(u \rightarrow v)$ rooted in $v$.
Using Lemmas~\ref{L:forest-tree} and~\ref{L:tree-forest}, we compute the values of $L_{\widehat{T}(u \rightarrow v)}$ and  $L_{\widehat{T}(v \rightarrow u)}$  for each edge $\{u,v\}$, and we store the results obtained recursively to avoid duplication in the computation. The overall time complexity is
$$\sum_{\{u,v\} \in E} \left(\bigo(\deg(u)n^2) + \bigo(\deg(v)n^2)\right) = \bigo(n^3\Delta),$$
by Lemma~\ref{L:forest-tree} and the fact that $|E| = n-1$.

Next, let the function $L_{\{u,v\}} : \{0,1,2,\ldots,n\}\rightarrow \mathbb{N}$ be defined by
 $$ L_{\{u,v\}}(i) = \max_{j \in J}\left\{ L_{\widehat{T}(u \rightarrow v)}(j) + L_{\widehat{T}(v \rightarrow u)}(i-j)\right\}$$
where $J=\left\{i\in \mathbb{N}~:~\max\{1,i-\size{\widehat{T}(v\rightarrow u)}\}\leq i \leq \min\{ i - 1,\size{\widehat{T}(u\rightarrow v)}\}\right\}$.
In other words, $L_{\{u,v\}}(i)$ is the maximum number of leaves that can be realized by all subtrees of $T$ with $i$ vertices and containing the edge $\{u,v\}$. Clearly, $L_{\{u,v\}}$ is computed in time $\Theta(n)$ when the functions $L_{\widehat{T}(u \rightarrow v)}$ and $L_{\widehat{T}(v \rightarrow u)}$ have been  computed. Hence, since any optimal subtree with $i \geq 2$ vertices has at least one edge, the optimal value $L_T(i)$
must be stored in at least one edge so that
$$L_T(i) = \begin{cases}
    0, & \mbox{if $i = 0,1$;} \\
    \max\left\{L_{\{u,v\}}(i)~:~\{u,v\} \in E\right\}, & \mbox{if $2\leq i \leq \size{T}$.}
\end{cases}$$
is computed in $\Theta(n)$ time as well. The global time complexity is therefore $\bigo(n^3\Delta)$, as claimed. Finally, the space complexity of $\bigo(n^2)$ follows from the fact that each of the $n - 1$ edges stores information of size $\bigo(n)$.
\end{proof}

\begin{figure}[htb]
  \centering
  \includegraphics[scale=0.8]{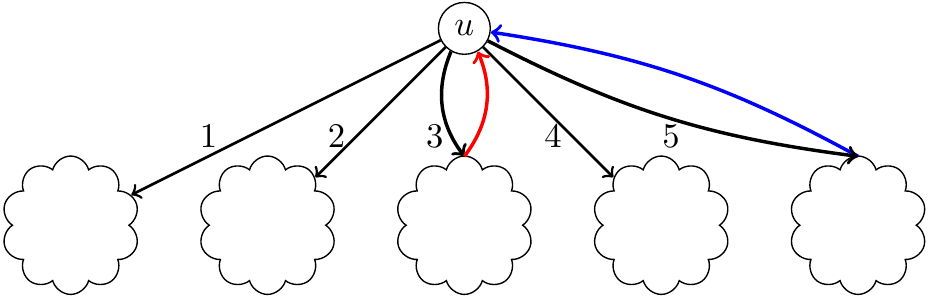}
  \caption{In the computation of the leaf function, the merger of the oriented forest induced by the blue arc can not be used to compute the merger of the oriented forest induced by the red arc.
  }\label{F:delta}
\end{figure}

\begin{remark}
At first sight, it seems that a more careful analysis could lead to a $\bigo(n^3)$ time complexity in Theorem~\ref{T:trees}. However, we have not been able to get rid of the $\Delta$ factor.
Consider a tree $\widehat{T}$ rooted in $u$ and the forest $\widehat{F}$ induced by $\widehat{T}$.
By Lemma~\ref{L:forest-tree}, the computation of $L_{\widehat{F}}$ requires $\deg(u) - 1$ ``merging steps'', i.e., computations using the recursive part of the formula.
Since, in the graph, each arc incident to $u$ induces a different rooted tree and an associated rooted forest,
it does not seem possible to reuse the merger of one forest in the computation of another one.
Therefore, the number of mergers of a given edge can increase up to $\Delta$.
For instance, consider the graph depicted in Figure~\ref{F:delta}. On one hand, the blue arc induces a tree $\widehat{T}_{blue}$ rooted in $u$ including the arcs $1,2,3,4$. The leaf function of the associated forest ${\widehat{F}_{blue}}$ depends on the value stored in the arcs $1,2,3,4$.
On the other hand, the leaf function of the associated forest ${\widehat{F}_{red}}$ for the red arc depends on the value stored in the arcs $1,2,4,5$.
So each arc outgoing from $u$ needs to be merged $\deg(u)-1$ times.
\end{remark}

\begin{figure}[htb]
\begin{center}
\includegraphics[scale=1]{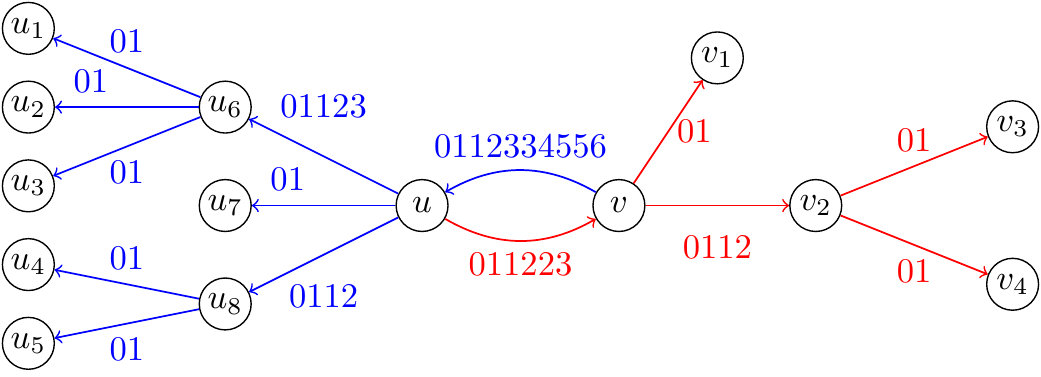}
\end{center}
\caption{Computations of the function $L_{\{u,v\}}$  for the edge $\{u,v\}$.}\label{fig:tree_algo}
\end{figure}

\begin{example}
Consider the tree depicted in Figure~\ref{fig:tree_algo} without orientation and with a single edge $\{u,v\}$.
By Theorem~\ref{T:trees}, the computation of $L_T(i)$ with $i\in\{2,\ldots,14\}$ requires to first compute the function $L_{\{x,y\}}$ for each edge $\{x,y\}\in E$ as shown in Figure~\ref{fig:tree_algo}.
The blue arc $(v,u)$ stores the value of $L_{\widehat{T}(v\to u)}$. As $L_{\widehat{T}(v\to u)}$ is computed recursively on the subtrees rooted in the children, the other blue arcs hold intermediate values necessary for the computation of $L_{\widehat{T}(v\to u)}$. Similarly, the red edges hold the intermediate values of the recursive computation of $L_{\widehat{T}(u\to v)}$.
\end{example}


\section{Perspectives}\label{sec:concl}

There is room for improving and specializing the branch and bound algorithm described in  Section~\ref{sec:algo}.
For  example, we  were able  to speed  up the  computations  for  the hypercube  $Q_6$ by  taking into  account some  symmetries (see  \cite{GitHub}).
In  a  more general  context, we believe that significant improvements could  be obtained  by  exploiting the  complete automorphism group of the graph, particularly in highly symmetric graphs.

We did not discuss the problem of \emph{generating} efficiently the set of all fully leafed induced subtrees.
However, it seems easy to show that, by slightly modifying the branch and bound algorithm and the dynamic programming approach of Section~\ref{sec:tree}, one could generate all optimal induced subtrees with polynomial time delay.

Finally, since the problem \MLIS{} is polynomial for trees, another possible study would be to restrict our attention to special families of graphs.
The classes of 3-colorable graphs, planar graphs and chordal graphs seem promising for finding a polynomial time algorithm, as well as the family of graphs with bounded tree-width.


\bibliographystyle{alpha}
\bibliography{references}
\label{sec:biblio}
\end{document}